\def\PIV{\mathrm{P}_{\mathrm{IV}}}
\numberwithin{equation}{section}
\newtheorem{theorem}{Theorem}[section]
\newtheorem{proposition}[theorem]{Proposition}
\newtheorem{lemma}[theorem]{Lemma}
\newtheorem{remark}[theorem]{Remark}
\date{}
\begin{document}

\title[Asymptotic behaviour of the fourth Painlev\'e transcendents]{Asymptotic behaviour of the fourth Painlev\'e transcendents in the space of initial values}

\author{Nalini Joshi}
\email{nalini.joshi@sydney.edu.au}
\address{School of Mathematics and Statistics F07, University of Sydney, NSW 2006, Australia}

\author{Milena Radnovi\'c}
\email{milena.radnovic@sydney.edu.au}
\address{School of Mathematics and Statistics F07, University of Sydney, NSW 2006, Australia
\newline
Mathematical Institute SANU, Belgrade, Serbia}

\classification{34M55, 34M30}
\keywords{Painlev\'e equation, Okamoto space}

\thanks{
The research reported in this paper was 
supported by grant no. FL120100094 from the Australian Research Council.
The work of M.R.~was partially supported by Project
no.~174020: \emph{Geometry and Topology of Manifolds, Classical Mechanics and Integrable
Dynamical Systems} of the Serbian Ministry of Education, Science and Technological Development.
\newline
M.R.~thanks Viktoria Heu for useful discussions.
}

\begin{abstract}
We study the asymptotic behaviour of solutions of the fourth Pain\-lev\'e equation as the independent variable goes to infinity in its space of (complex) initial values, which is a generalisation of phase space described by Okamoto.
We show that the limit set of each solution is compact and connected and, moreover, that any non-special solution has an infinite number of poles and infinite number of zeroes.
\end{abstract}
\maketitle

\section{Introduction}\label{sec:intro}
We study the dynamics of solutions of the fourth Painlev\'e equation 
\begin{equation}\label{eq:PIV}
\PIV\ : \ \frac{d^2y}{dx^2}=\frac{1}{2y}\left(\frac{dy}{dx}\right)^2+\frac32y^3+4xy^2+2(x^2-\alpha)y+\frac{\beta}{y},
\end{equation}
where $y=y(x)$ is a function of  $x\in\mathbb{C}$, and $\alpha,\beta$ complex constants, in the singular limit as $|x|\to\infty$ in the space of initial values, a generalisation of phase space first constructed in  \cite{Okamoto1979}. In this paper, we prove that each non-rational transcendental solution of $\PIV$ has infinitely many zeroes and poles in $\mathbb C$ (see Theorem \ref{th:poleszeroes}).

We start by transforming $\PIV$ to new coordinates that make the study of the limit $|x|\to\infty$ more explicit. The proof contains three ingredients: (i) the resolution of singularities of the Painlev\'e vector field in the space of initial values; (ii) an analytic study of the flow of the Painlev\'e vector field close to the exceptional lines in the resolved space; and (iii) construction of the complex limit set of each solution.  Using (i) and (ii), we prove that a certain set, called the infinity set, acts as a repeller of the Painlev\'e flow in Okamoto's space as $|x|\to\infty$ (see Theorem \ref{th:asymptotics}). Based on the estimates in the proof of this result, we show that the limit set of solutions is non-empty, compact, connected, and invariant under the flow of the associated autonomous system (see Theorem \ref{th:K}). Then by showing that the flow intersects infinitely often with the last three exceptional lines in the space of initial values, we prove Theorem \ref{th:poleszeroes}.
Earlier papers by one of us provided analogous results for the first and second Painlev\'e equations (\cite{DJ2011,HJ2014}). 

The fourth Painlev\'e equation has been studied from various perspectives:  see e.g., \cite{Okamoto1986,BCH1993,NO1997,NY1999,mat1999,Clark2003JMP,GJP2005,GJP2006,Stoy2014,NP2014}. However, the study of asymptotic behaviours in the limit $|x|\to\infty$ for $x\in{\mathbb C}$ appears to be incomplete in the literature. In this paper, we provide global information about the solutions' limiting behaviours in the complex plane in this singular limit. 

In Section \ref{sec:okamoto} we decribe the construction of Okamoto's space of initial values for Equation \eqref{eq:PIV}.
Basic steps of the resolution procedure are given there, but details of the calculations appear in Appendix \ref{sec:res-piv}.
Section \ref{sec:special} is devoted to the special solutions of the fourth Painlev\'e equation and their relation with singular curves in the elliptic pencil underlying the autonomous system.
Section \ref{sec:infinity} contains the results on asymptotic behaviour of the solutions and contains the proof of Theorem \ref{th:asymptotics}, while Section \ref{sec:limitset} provides information about limit sets and contains the proofs of Theorems \ref{th:K} and \ref{th:poleszeroes}.

\section{Space of Initial Values of $\PIV$}\label{sec:okamoto}
The fourth Painlev\'e equation (\ref{eq:PIV})
is equivalent to the following system:
\begin{equation}\label{eq:sistem}
\begin{aligned}
\frac{dy_1}{dx}&=-y_1(y_1+2y_2+2x)-2\alpha_1,
\\
\frac{dy_2}{dx}&=y_2(2y_1+y_2+2x)-2\alpha_2,
\end{aligned}
\end{equation}
with $y=y_1$, $\alpha=1-\alpha_1-2\alpha_2$, $\beta=-2\alpha_1^2$.
System (\ref{eq:sistem}) is Hamiltonian with the following Hamiltonian function:
\begin{equation}
H(x,y_1,y_2)=-y_1y_2(y_1+y_2+2x)+2\alpha_2y_1-2\alpha_1y_2,
\end{equation}
that is,  \eqref{eq:sistem} is equivalent to Hamilton's equations of motion
\begin{equation*}
	\frac{dy_1}{dx}=\frac{\partial H}{\partial y_2},\quad
	\frac{dy_2}{dx}= -\frac{\partial H}{\partial y_1} .
\end{equation*}

The asymptotic behaviour of the Painlev\'e transcendents was first studied by Boutroux \cite{Boutroux1913,Boutroux1914}.
There, for the first Painlev\'e equation, he made certain change of variables in order to make the asymptotic behaviours more explicit.
In the same spirit, we make the following change of variables for (\ref{eq:sistem}):
$$
y_1=xu,\quad y_2=xv,\quad z=\frac{x^2}2 
$$
which transforms the system (\ref{eq:sistem}) to
\begin{equation}\label{eq:sistemz}
\begin{aligned}
u'&=-u(u+2v+2)-\frac{\alpha_1}z-\frac{u}{2z},
\\
v'&=v(2u+v+2)-\frac{\alpha_2}z-\frac{v}{2z}.
\end{aligned}
\end{equation}
Here and later in this paper, primes denote differentiation with respect to $z$.

For each $z\neq0$, and each $(u_0,v_0)\in\mathbb{C}^2$, there is a unique solution of (\ref{eq:sistemz}) satisfying the initial conditions
$u(z_0)=u_0$, $v(z_0)=v_0$.
Since the solutions are meromorphic and therefore will become unbounded in neighbourhoods of movable poles, it is natural to consider the solutions as maps from $\mathbb{C}$ to $\mathbb{CP}^2$.
However,  for any given $z_0\neq0$, infinitely many  solutions may pass through certain points in $\mathbb{CP}^2$.
Such points will be called \emph{base points} in this paper.

To resolve the flow through such points, we need to construct the \emph{space of initial conditions} (see \cite{Gerard}), where the graph of each solution will represent a separate leaf of the foliation.
The spaces of initial conditions for all six Painlev\'e equations were constructed in \cite{Okamoto1979}.
The solutions are separated by resolving (i.e., blowing up) the base points.

In this paper, we explicitly construct such a resolution  of the system (\ref{eq:sistemz}).
The  details of the calculation can be found in Appendix \ref{sec:res-piv}, and now we describe the main steps in that resolution process.

\subsection{Resolution of singularities}

System (\ref{eq:sistemz}) has no singularities in the affine part of $\mathbb{CP}^2$.
However, at the line $\mathcal{L}_0$ at the infinity, as calculated in Appendix \ref{sec:affine}, the system has three base points: $b_0$, $b_1$, $b_2$, whose coordinates do not depend on $z$.

In the next step, we construct blow ups at points $b_0$, $b_1$, $b_2$.
In the resulting space, we obtain three exceptional lines which we denote by $\mathcal{L}_1$, $\mathcal{L}_2$, $\mathcal{L}_3$ respectively.
The induced flow will have one base point on each of these lines, denote them by $b_3$, $b_4$, $b_5$ respectively. These points are also base points for the autonomous system, as their
 coordinates do not depend on $z$.
See Appendix \ref{sec:b012} for details.

Next, blow ups at points $b_3$, $b_4$, $b_5$ are constructed.
The corresponding exceptional lines are $\mathcal{L}_4$, $\mathcal{L}_5$, $\mathcal{L}_6$.
On each of these three lines, there is a base point of the flow.
We denote them by $b_6$, $b_7$, $b_8$.
The coordinates of these points depend on $z$ and they approach the base points of the autonomous flow as $z\to\infty$.
See Appendix \ref{sec:b345} for details.

Finally, blow ups at $b_6$, $b_7$, $b_8$ show that there are no new base points.
The exceptional lines are denoted by $\mathcal{L}_7(z)$, $\mathcal{L}_8(z)$, $\mathcal{L}_9(z)$.

By this procedure, we constructed the fibers $\mathcal{F}(z)$, $z\in\mathbb{C}\cup\{\infty\}\setminus\{0\}$ of the Okamoto space $\mathcal{O}$ for the system (\ref{eq:sistemz}), see Figure \ref{fig:fiber}.
We denote by $\mathcal{L}_i^*$ the proper preimages of the lines $\mathcal{L}_i$, $0\le i\le6$. 

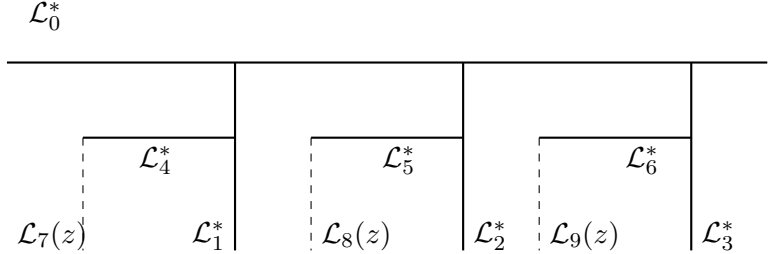
\begin{figure}[ht!] 
\centering
  \begin{tikzpicture}
  \draw[thick] (0,-0.5) -- (10,-0.5);
  		\draw (0.5, -0.2) node[above] {$\mathcal{L}_0^*$};
  \draw[thick]  (3,-0.5) -- (3,-3);
  		\draw (3,-2.8) node[left] {$\mathcal{L}_1^*$};
  \draw[thick]  (6,-0.5) -- (6,-3);
  		\draw (6,-2.8)  node[right] {$\mathcal{L}_2^*$};
  \draw[thick]  (9,-0.5) -- (9,-3);
  		\draw (9,-2.8)  node[right] {$\mathcal{L}_3^*$};
  \draw[thick]  (3,-1.5) -- (1,-1.5);
  		\draw (1.6,-1.8) node[right] {$\mathcal{L}_4^*$};
  \draw[thick]  (6,-1.5) -- (4,-1.5);
  		\draw (4.8,-1.8) node[right] {$\mathcal{L}_5^*$};
  \draw[thick]  (9,-1.5) -- (7,-1.5);
  		\draw (8,-1.8) node[right] {$\mathcal{L}_6^*$};
  \draw[dashed]  (1,-1.5) -- (1,-3);
  		\draw (0,-2.8) node[right] {$\mathcal{L}_7(z)$};
  \draw[dashed]  (4,-1.5) -- (4,-3);
  		\draw (4,-2.8) node[right] {$\mathcal{L}_8(z)$};
  \draw[dashed]  (7,-1.5) -- (7,-3);
  		\draw (7,-2.8) node[right] {$\mathcal{L}_9(z)$};
\end{tikzpicture}
\caption{Fiber $\mathcal{F}(z)$ of the Okamoto space. At the points of $\mathcal{L}_7(z)$, $u$ has a pole and $v$ a zero; on $\mathcal{L}_8(z)$ both have poles; and on $\mathcal{L}_9(z)$, $u$ has a zero and $v$ a pole.}\label{fig:fiber}
\end{figure}


The set where the vector field associated to (\ref{eq:sistem}) becomes infinite will be denoted $\mathcal{I}=\bigcup_{j=0}^6\mathcal{L}_j^*$. 

\subsection{The autonomous system}

The fiber $\mathcal{F}(\infty)$ of the Okamoto space corresponds to the autonomous system obtained by omitting the $z$-dependent terms in (\ref{eq:sistemz}):
\begin{equation}\label{eq:sistemzinf}
\begin{aligned}
u'&=-u(u+2v+2),
\\
v'&=v(2u+v+2),
\end{aligned}
\end{equation}
which is equivalent to
$$
u''=\frac{u'^2}{2u}+\frac{3}{2}u^3+4u^2+2u,
$$
and further to the following family:
$$
(u')^2=\frac12 u^4+2u^3+2u^2+cu, \quad c\in\mathbb{C}.
$$
The solutions of (\ref{eq:sistemzinf}) are thus elliptic funtions.

System (\ref{eq:sistemzinf}) is Hamiltonian, i.e., 
\begin{equation*}
u'=\frac{\partial E}{\partial v},\quad v'=-\frac{\partial E}{\partial u}.
\end{equation*}
where $E=-uv(u+v+2)$.

Note that $b_0$, \dots, $b_5$ are base points of (\ref{eq:sistemzinf}) as well, while $b_6$, $b_7$, $b_8$ will tend to the base points of the autonomous system as $z\to\infty$.

\section{The special solutions}\label{sec:special}
In this section, we analyse singular cubic curves in the pencil parametrised by solutions of the autonomous system (\ref{eq:sistemzinf}) and show that the corresponding solutions of $\PIV$ are either rational or given by parabolic cylinder and exponential functions. 

\subsection{Special solutions and singular cubic curves}\label{sec:special-singular}

The pencil of elliptic curves arising from the Hamiltonian of the autonomous system (\ref{eq:sistemzinf}) is given by the zero set of $h(u,v)=c+uv(u+v+2)$.
For general values of constant $c$, the corresponding curves will be smooth. To investigate singularities, consider the conditions
$$
\frac{\partial h}{\partial u}=0,
\quad
\frac{\partial h}{\partial v}=0,
$$
which give
$$
v(2u+v+2)=0,
\quad
u(u+2v+2)=0.
$$
The solutions are $(0,0)$, $(0,-2)$, $(-2,0)$, which lie on the curve corresponding to $c=0$, and $(-\frac23,-\frac23)$ on the curve corresponding to $c=-\frac{8}{27}$. In other words, there are two singular curves in the pencil and the first (given by $c=0$) contains three singular points, while the second (given by $c=-\frac{8}{27}$) contains one singularity.

Consider first the case $c=0$.
The corresponding curve is $uv(u+v+2)=0$, which is a singular cubic consisting of three lines: $u=0$, $v=0$ and $u+v+2=0$.

\begin{proposition}\label{prop:special}
For $(u,v)$ being a solution of the non-autonomous system (\ref{eq:sistemz}), each derivative of $E=-uv(u+v+2)$ with respect to $z$ vanishes if any of the following three sets of conditions is satisfied:
\begin{enumerate}
 \item $u=0$ and $\alpha_1=0$;
 \item $v=0$ and $\alpha_2=0$;
 \item $u+v+2=0$ and $\alpha_1+\alpha_2=1$.
\end{enumerate}
\end{proposition}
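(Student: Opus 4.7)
The function $E = -uv(u+v+2)$ is the Hamiltonian of the autonomous system \eqref{eq:sistemzinf}, so when I differentiate it along the full non-autonomous flow \eqref{eq:sistemz} the autonomous contributions cancel identically, leaving only the terms coming from the $z$-dependent corrections $-\alpha_i/z - y_i/(2z)$. Each of the three sets of conditions singles out one of the irreducible components of the singular cubic $\{E = 0\}$ found in Section \ref{sec:special-singular}, namely the lines $u = 0$, $v = 0$ and $u + v + 2 = 0$. The plan is to show that, under the stated parameter restrictions, each of these lines is an invariant manifold for \eqref{eq:sistemz}; then any solution starting on such a line stays on it, so $E(u(z),v(z)) \equiv 0$, and hence every $z$-derivative of $E$ vanishes trivially.

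The invariance checks can be carried out component by component. For the line $\{u = 0\}$ I would simply read off the first equation of \eqref{eq:sistemz}, which restricts on this line to $u' = -\alpha_1/z$; this vanishes identically iff $\alpha_1 = 0$, in which case $u \equiv 0$ is preserved by the flow. The treatment of $\{v = 0\}$ is completely symmetric and yields $\alpha_2 = 0$. For the diagonal line $\{u + v + 2 = 0\}$, I would add the two equations of \eqref{eq:sistemz}, use $u + v = -2$ to collapse the cubic and quadratic pieces, and reduce the right-hand side to $(u+v)' = (1 - \alpha_1 - \alpha_2)/z$; invariance then corresponds exactly to condition (iii).

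The main obstacle, such as it is, lies in the last case: one must reorganise several cubic and quadratic cross-terms using $u+v = -2$ before the expression collapses to a single constant over $z$; the first two cases are immediate from the form of \eqref{eq:sistemz}. Once invariance has been established in all three cases, the conclusion that $E \equiv 0$ along the solution, and hence that every $z$-derivative of $E$ vanishes, is automatic, and no further analysis of $E'$ itself is required.
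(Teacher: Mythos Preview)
Your argument is correct and rests on the same key observation as the paper's proof: under the stated parameter restriction, the relevant linear factor of $E$ (namely $u$, $v$, or $u+v+2$) divides its own $z$-derivative along the flow, i.e.\ the corresponding line is invariant for \eqref{eq:sistemz}. Your check of this in each case is accurate; in particular, adding the two equations of \eqref{eq:sistemz} gives $(u+v+2)' = (v-u)(u+v+2) - (\alpha_1+\alpha_2)/z - (u+v)/(2z)$, which on $u+v+2=0$ collapses to $(1-\alpha_1-\alpha_2)/z$ exactly as you claim.

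Where you differ from the paper is only in how you extract the conclusion. The paper first writes out $E'$ explicitly and then argues by induction that each successive derivative $E^{(n)}$, viewed as a polynomial in $u,v$, remains divisible by the chosen factor, hence vanishes on the line. You bypass both the explicit formula for $E'$ and the induction by observing that invariance of the line forces $E(u(z),v(z))\equiv 0$ along any solution on it, so all $z$-derivatives vanish automatically. This is a genuinely cleaner packaging of the same idea: it buys you the result without ever computing $E'$, at the modest cost of invoking uniqueness of solutions to pass from ``$u(z_0)=0$'' to ``$u\equiv 0$''. The paper's formulation, on the other hand, yields the slightly stronger algebraic statement that $E^{(n)}$ is divisible by the factor as a formal expression, which is not needed here but is what the inductive divisibility argument naturally produces.
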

\begin{proof}
We have:
$$
\begin{aligned}
E'&=\frac{dE}{dz}=\frac{\partial E}{\partial u}u'+\frac{\partial E}{\partial v}v'
\\
&=\frac{1}{z}\left(\alpha_2 u( u + 2 v+2) + \alpha_1v( 2 u  + v+2) -uv-\,\frac{3E}{2}\right).
\end{aligned}
$$
{\em Case 1.} Note that $u$ is a divisor of $E$, which is polynomial in $u$ and $v$, and that, when $\alpha_1=0$,  $u$ is also a divisor of $u'$. By induction, it follows that all derivatives of $E$ will be multiples of $u$ and polynomials of $u$, $v$ and thus equal to zero for $u=0$.\hfill\\
{\em Case 2.} The proof is analogous to that of Case 1.\hfill\\
{\em Case 3.} Note that that $E$ is a product of $u+v+2$ and a polynomial of $u$, $v$.
For $\alpha_1+\alpha_2=1$, the derivative of $u+v+2$ is of the same form:
$$
(u+v+2)'=-\frac{(u + v+2) (1 + 2 u z - 2 v z)}{2 z}.
$$
By induction, the same result as in Cases 1 and 2 will hold for all derivatives of $E$. 
\end{proof}

\begin{remark}
$\alpha_1=0$ is equivalent to $\beta=0$; $\alpha_2=0$ to $\beta=-2(1-\alpha)^2$; and $\alpha_1+\alpha_2=1$ to $\beta=-2(1+\alpha)^2$.
\end{remark}

For $\beta=-2(1+\epsilon\alpha)^2$, $\epsilon\in\{-1,1\}$, the Painlev\'e equation (\ref{eq:PIV}) is equivalent to the following Riccati equation:
$$
\frac{dy}{dx}=\epsilon(y^2+2xy)-2(1+\epsilon\alpha),
$$
which can be solved in terms of parabolic cylinder and exponential functions:
\begin{gather*}
y=-\epsilon\frac{d\phi/dx}{\phi},
\\
\phi(x)=\left(C_1U\left(\alpha+\frac{\epsilon}{2},\sqrt{2}x\right)+C_2V\left(\alpha+\frac{\epsilon}{2},\sqrt{2}x\right)\right)e^{\epsilon x^2/2}.
\end{gather*}
Note that a zero of $\phi(x)$ corresponds to a pole of $u(z)$.

For $\epsilon=1$, which is Case 3 of Proposition \ref{prop:special}, $v$ also has a pole, thus each point of $\mathcal{L}_8^*$ corresponds to a special solution.
For $\epsilon=-1$, which is Case 2 of Proposition \ref{prop:special}, $v$ has a zero, thus each point of $\mathcal{L}_7^*$ corresponds to a special solution.
For $\beta=0$, which is Case 1 of Proposition \ref{prop:special}, the solution can be expressed in terms of Hermite polynomials.
Each point of $\mathcal{L}_9^*$ corresponds to such a solution.

Now, consider the case $c=-\frac{8}{27}$.
The corresponding curve is $uv(u+v+2)=\frac{8}{27}$ and it has a unique singular point $(-\frac23,-\frac23)$.
For $\tilde{u}=u+\frac23$ and $\tilde{v}=v+\frac23$, the equation of the curve becomes:
$$
-\frac23(\tilde{u}^2+\tilde{u}\tilde{v}+\tilde{v}^2)+\tilde{u}\tilde{v}(\tilde{u}+\tilde{v})=0,
$$
thus the curve has an ordinary self-intersection at the singular point.
The corresponding solutions are rational.

\subsection{Special rational solutions of $\PIV$}\label{sec:rational}

Consider the following rational solutions of $\PIV$:
\begin{align*}
&y=\pm\frac{1}{x},\quad\text{for}\ \alpha=\pm2,\ \beta=-2;\\
&y=-2x,\quad\text{for}\ \alpha=0,\ \beta=-2;\\
&y=-\frac{2}{3}x,\quad\text{for}\ \alpha=0,\ \beta=-\frac{2}{9}.
\end{align*}
The corresponding solutions of the system (\ref{eq:sistemz}) are:
\begin{align*}
u&=\frac{1}{2z}, & v&=0, & (\alpha_1,\alpha_2)&=(-1,0);\\
u&=-\frac{1}{2z}, & v&=\frac{1}{2z},  & (\alpha_1,\alpha_2)&=(1,1);\\
u&=\frac{1}{2z}, & v&=-2, & (\alpha_1,\alpha_2)&=(1,-1);\\
u&=-\frac{1}{2z}, & v&=\frac{1}{2z}-2, & (\alpha_1,\alpha_2)&=(-1,2);\\
u&=-2, & v&=0,  & (\alpha_1,\alpha_2)&=(1,0);\\
u&=-2, & v&=-\frac{1}{2z},  & (\alpha_1,\alpha_2)&=(-1,1);\\
u&=-\frac{2}{3}, & v&=-\frac23,  & (\alpha_1,\alpha_2)&=\left(\frac13,\frac13\right);\\
u&=-\frac{2}{3}, & v&=-\frac{1}{2z}-\frac23,  & (\alpha_1,\alpha_2)&=\left(-\frac13,\frac23\right).
\end{align*}

All other rational solutions can be obtained from these solutions by B\"acklund transformations \cite{NoumiBOOK}:
\begin{align*}
s_1\ :\ &(u,v;\alpha_1,\alpha_2)\to(u,v+\frac{\alpha_1}{zu};-\alpha_1,\alpha_1+\alpha_2),
\\
s_2\ :\ &(u,v;\alpha_1,\alpha_2)\to(u-\frac{\alpha_2}{zv},v;\alpha_1+\alpha_2,-\alpha_2),
\\
s_3\ :\ &(u,v;\alpha_1,\alpha_2)\to(u-\frac{\alpha_3}{z(u+v+2)},v+\frac{\alpha_3}{z(u+v+2)};1-\alpha_2,1-\alpha_1),
\\
\pi\ :\  &(u,v;\alpha_1,\alpha_2)\to(v,-2-u-v;\alpha_2,1-\alpha_1-\alpha_2),
\end{align*}
which have the following properties:
\begin{gather*}
s_1^2=s_2^2=s_3^2=1,
\quad
(s_1s_2)^3=(s_2s_3)^3=(s_3s_1)^3=1,
\quad
\pi^3=1,
\\
s_2=\pi s_1 \pi^2,
\quad
s_3=\pi s_0 \pi^2,
\quad
s_0=\pi s_3 \pi^2.
\end{gather*}

Also, all special solutions of the fourth Painlev\'e equation can be obtained by the B\"acklund transformations from the solutions mentioned in Section \ref{sec:special-singular}.

\section{The solutions near the infinity set}\label{sec:infinity}
In this section, we will study the behaviour of the solutions of the system (\ref{eq:sistem}) near the set $\mathcal{I}$, where the vector field associated to the system is infinite.

In Lemmas \ref{lemma:E'/E}-\ref{lemma:L3-L0} and Theorem \ref{th:asymptotics}, we prove that $\mathcal{I}$ is repelling, i.e.~the solutions do not intersect it; and, moreover, each solution approaching  sufficiently close to $\mathcal{I}$ at point $z$ will have a pole in a neighbourhood of $z$.

\begin{lemma}\label{lemma:E'/E}
For every $\varepsilon>0$ there exists a neighbourhood $U$ of $\mathcal{L}_0^*$ such that
$$
\left|\frac{{E'}}{E}+\frac{3}{2z}\right|<\varepsilon\quad\text{in}\ U.
$$
For each compact subset $K$ of 
$(\mathcal{L}_1^*\setminus\mathcal{L}_4^*)\cup(\mathcal{L}_2^*\setminus\mathcal{L}_5^*)\cup(\mathcal{L}_3^*\setminus\mathcal{L}_6^*)$,
there exists a neighbourhood $V$ of $K$ and a constant $C>0$ such that:
$$
\left|z\frac{{E'}}{E}\right|<C\quad\text{in}\ V\ \text{for all}\ z\neq0.
$$
\end{lemma}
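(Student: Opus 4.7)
The starting point is the identity
$$
z\frac{E'}{E} = \frac{F}{E} - \frac{3}{2}, \qquad F := \alpha_2 u(u+2v+2) + \alpha_1 v(2u+v+2) - uv,
$$
which is an immediate rewriting of the formula for $E'$ already derived in the proof of Proposition \ref{prop:special}. Since $F/E$ depends only on $(u,v)$, both assertions of the lemma reduce to analysing this single rational function on the resolved space: the first inequality amounts to $|F/E|<\varepsilon|z|$, the second to $|F/E|$ being locally bounded.

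For the first statement, the plan is to show that $F/E$ extends through $\mathcal{L}_0^*$ and vanishes identically there. In the affine chart $(u_1,v_1)=(1/u,v/u)$ at infinity, a direct substitution produces
$$
\frac{F}{E} = -\frac{u_1\bigl[\alpha_2(1+2v_1+2u_1)+\alpha_1 v_1(2+v_1+2u_1)-v_1\bigr]}{v_1(1+v_1+2u_1)},
$$
which is manifestly $O(u_1)$ away from the three zeros of the denominator on $\mathcal{L}_0=\{u_1=0\}$; these zeros are precisely the base points $b_0, b_1, b_2$ (the last two being visible in the companion chart $(u/v,1/v)$). At each $b_i$ I then perform the first blow-up and verify in the chart that sees $\mathcal{L}_0^*$ that the lifted ratio still vanishes along the proper transform. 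Compactness of $\mathcal{L}_0^*$ together with continuity then gives, for any prescribed $\delta>0$, a neighbourhood $U_0$ of $\mathcal{L}_0^*$ on which $|F/E|<\delta$; intersecting with $\{|z|\geq R_0\}$ for some fixed $R_0$ and choosing $\delta<\varepsilon R_0$ yields a neighbourhood $U$ on which $|F/(zE)|<\varepsilon$.

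For the second statement I argue chart-by-chart around each of $\mathcal{L}_1^*,\mathcal{L}_2^*,\mathcal{L}_3^*$. In the blow-up chart $s_1=u_1$, $t_1=v_1/u_1$ obtained from $b_0$ one obtains, after a brief calculation,
$$
\left.\frac{F}{E}\right|_{\mathcal{L}_1}=-\frac{\alpha_2}{t_1},
$$
whose only pole on the $\mathbb{P}^1$ that is $\mathcal{L}_1$ is at $t_1=0$, which is exactly the next base point $b_3$ and therefore corresponds to $\mathcal{L}_1^*\cap\mathcal{L}_4^*$ in the fully resolved space. Hence $F/E$ is continuous on the compact set $\mathcal{L}_1^*\setminus\mathcal{L}_4^*$, and continuity together with compactness of $K$ yields a neighbourhood $V_1$ of $K\cap\mathcal{L}_1^*$ and a constant $C_1$ with $|F/E|<C_1$ on $V_1$; via $|zE'/E|\leq|F/E|+3/2$ this gives the required bound. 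The cases $i=2,3$ are handled identically in the charts adapted to $b_1$ and $b_2$, and the final $V$ and $C$ are obtained by taking a union and a maximum.

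The main obstacle is purely organisational: covering the compact curves $\mathcal{L}_0^*,\mathcal{L}_1^*,\mathcal{L}_2^*,\mathcal{L}_3^*$ by the correct charts (several per curve) and tracking the local expressions for $F/E$ through successive blow-ups. The algebraic reason behind the clean answer is the mismatch of degrees, $\deg F=2<3=\deg E$, which forces $F/E$ to vanish on $\mathcal{L}_0^*$ and to have its poles on $\mathcal{L}_i^*$ concentrated precisely at the next base point; once this is observed in each chart, the estimates follow from compactness.
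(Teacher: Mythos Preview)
Your approach is correct and coincides with the paper's proof: both compute $E'/E+\tfrac{3}{2z}$ in the affine charts $(u_{02},v_{02})$, $(u_{03},v_{03})$ covering $\mathcal{L}_0^*$ and in the first blow-up charts covering $\mathcal{L}_1^*,\mathcal{L}_2^*,\mathcal{L}_3^*$, obtaining the same local expressions (your $F/E$ in the $(u_1,v_1)$ chart is exactly the paper's $z\,r_{02}$). Your explicit isolation of the $z$-independent ratio $F/E=zE'/E+\tfrac32$ and the degree observation $\deg F<\deg E$ are a tidy repackaging, but the chart-by-chart verification and the compactness conclusion are identical to the paper's argument.
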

\begin{proof}
In the charts $(u_{02},v_{02})$ and $(u_{03},v_{03})$ (see Appendix \ref{sec:affine}), the function 
$$r=\frac{{E'}}{E}+\frac{3}{2z}$$ 
is equal to:
\begin{equation*}
\begin{aligned}
r_{02}&=-\frac{u_{02} (\alpha_2 + 2 \alpha_2 u_{02} +(2 \alpha_1  + 2 \alpha_2-1) v_{02} + 2 \alpha_1 u_{02} v_{02} + \alpha_1 v_{02}^2)}
{v_{02} (1 + 2 u_{02} + v_{02}) z},
\\
r_{03}&=-\frac{u_{03}(\alpha_1 + 2 \alpha_1 u_{03} + (2 \alpha_1+ 2 \alpha_2-1) v_{03} + 2 \alpha_2 u_{03} v_{03} + \alpha_2 v_{03}^2)}{v_{03} (1 + 2 u_{03} + v_{03}) z}.
\end{aligned}
\end{equation*}
The first statement of the lemma follows immediately from these expressions, since $\mathcal{L}_0^*$ is given by $u_{02}=0$ and $u_{03}=0$ in those charts, see Section \ref{sec:affine}.

Near $\mathcal{L}_1^*$, in the respective coordinate charts (see Section \ref{sec:b012}), we have
\begin{equation*}
z\frac{E'}{E}+\frac{3}{2} \sim \begin{cases}
				 &-\,\alpha_2\,u_{11}\\
				 &-\,\displaystyle\frac{\alpha_2}{v_{12}}
				 \end{cases}
				 \end{equation*}
Since $\mathcal{L}_4^*$ is given by $v_{12}=0$, see Section \ref{sec:b345}, the statement of the lemma is true for the compact sets $K$ contained in a neighbourhood of 
$\mathcal{L}_1^*\setminus\mathcal{L}_4^*$.

On $\mathcal{L}_2^*$, (see Section \ref{sec:b012}), we have
\begin{equation*}
z\frac{E'}{E} \sim \begin{cases}
		\displaystyle-\frac{3 + 2 (2 + \alpha_1 + \alpha_2) u_{21}}{2 (1 + 2 u_{21})}\\
		\displaystyle-\frac{4 + 2 \alpha_1 + 2 \alpha_2 + 3 v_{22}}{2 (2 + v_{22})}
				 \end{cases}
				 \end{equation*}
Therefore, since $\mathcal{L}_5^*$ is given by the equations $u_{21}=-\frac{1}{2}$ and $v_{22}=-2$, the statement is true for the compacts contained in a neighbourhood of $\mathcal{L}_2^*\setminus\mathcal{L}_5^*$.

On $\mathcal{L}_3^*$,  (see Section \ref{sec:b012}), we have
\begin{equation*}
z\frac{E'}{E} +\frac{3}{2}\sim \begin{cases}
		\displaystyle-\,\alpha_1u_{31}\\
		\displaystyle -\,\frac{\alpha_1}{v_{32}}
				 \end{cases}
				 \end{equation*}
Since $\mathcal{L}_6$ is given by $v_{32}=0$, the statement of the lemma is true for the compact sets $K$ contained in a neighbourhood of 
$\mathcal{L}_3^*\setminus\mathcal{L}_6^*$.  
\end{proof}

\begin{lemma}
There exists a continuous complex valued function $d$ on a neighbourhood of the infinity set $\mathcal{I}$ in the Okamoto space,
such that:
$$
d=
\begin{cases}
\frac{1}{E}, &\text{in a neighbourhood of}\ \mathcal{I}\setminus(\mathcal{L}_4^*\cup\mathcal{L}_5^*\cup\mathcal{L}_6^*),
\\
J_{71}, &\text{in a neighbourhood of}\ \mathcal{L}_4^*\setminus\mathcal{L}_1^*,
\\
\frac{J_{82}}{2}, &\text{in a neighbourhood of}\ \mathcal{L}_5^*\setminus\mathcal{L}_2^*,
\\
-J_{91},&\text{in a neighbourhood of}\ \mathcal{L}_6^*\setminus\mathcal{L}_3^*.
\end{cases}
$$
\end{lemma}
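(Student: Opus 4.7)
The plan is to verify the lemma by direct computation in the blow-up charts summarised in Section \ref{sec:okamoto} and detailed in Appendix \ref{sec:res-piv}. The content of the statement is that the rational function $1/E$, with $E = -uv(u+v+2)$, extends continuously from its regular locus to a full neighbourhood of $\mathcal{I}$, and that on a neighbourhood of each of the three second-generation exceptional lines $\mathcal{L}_4^*, \mathcal{L}_5^*, \mathcal{L}_6^*$ this extension is represented in a specific chart by one of the coordinate functions $J_{71}$, $J_{82}/2$, $-J_{91}$.

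First I would dispose of the easier part. In each chart of Appendix \ref{sec:affine} covering $\mathcal{L}_0^*$, the polynomial $E$ acquires a pole of positive order along the line at infinity, so $1/E$ is holomorphic there and vanishes on $\mathcal{L}_0^*$. Analogous substitutions in the charts of Appendix \ref{sec:b012} handle $\mathcal{L}_1^*, \mathcal{L}_2^*, \mathcal{L}_3^*$ away from the subsequent base points $b_3, b_4, b_5$ and give the same conclusion. This defines $d := 1/E$ as a continuous function on a neighbourhood of $\mathcal{I}\setminus(\mathcal{L}_4^*\cup\mathcal{L}_5^*\cup\mathcal{L}_6^*)$.

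Next I would turn to the charts near $\mathcal{L}_4^*, \mathcal{L}_5^*, \mathcal{L}_6^*$ from Appendix \ref{sec:b345}, in which the relevant coordinates $J_{71}, J_{82}, J_{91}$ are explicit rational functions of $(u,v)$. Substituting these expressions into $E$ and simplifying, I expect to obtain functional identities $1/E = J_{71}$, $1/E = J_{82}/2$, $1/E = -J_{91}$ on the chart domains where $E$ does not vanish. These identities then show that the three expressions $J_{71}$, $J_{82}/2$, $-J_{91}$ extend $1/E$ holomorphically across the respective exceptional lines (on which they take value zero) and agree with $d = 1/E$ on overlaps with the previously constructed piece. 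Gluing the four prescriptions yields a single continuous function $d$ on a neighbourhood of $\mathcal{I}$.

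The main obstacle is keeping the chart bookkeeping straight and pinning down the normalising constants: the factor $1/2$ for $\mathcal{L}_5^*$ reflects that $\mathcal{L}_8$ is the line on which \emph{both} $u$ and $v$ have poles (see Figure \ref{fig:fiber}), so $E$ acquires a zero of order two in the appropriate transverse coordinate; the sign in $-J_{91}$ is fixed by the orientation of the chart at the third blow-up. What needs particular care is verifying that the claimed coordinate identities hold as genuine functional equalities on the chart overlaps, not merely up to continuous correction, since the piecewise definition of $d$ is otherwise inconsistent. Once the explicit coordinate formulas of Appendix \ref{sec:res-piv} are in hand, the verification reduces to routine algebra.
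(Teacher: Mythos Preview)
Your proposal rests on the expectation of \emph{exact} functional identities $1/E = J_{71}$, $1/E = J_{82}/2$, $1/E = -J_{91}$ in the respective charts; you even flag that if these turn out to hold only ``up to continuous correction'' the piecewise definition is inconsistent. Unfortunately that is precisely what happens. The appendix computes, for example,
\[
E\,J_{71} \;=\; 1 + 2u_{71}v_{71} + u_{71}^2v_{71}^3
  + \frac{\alpha_2}{v_{71}\,z}\bigl(1 + 2u_{71}v_{71} + 2u_{71}^2v_{71}^3\bigr)
  + \frac{\alpha_2^2}{z^2}\,u_{71}^2v_{71},
\]
which is not identically $1$. The reason is structural: the third-level blow-ups are performed at the $z$-dependent base points $b_6,b_7,b_8$ (Appendix~\ref{sec:b345}), so the chart coordinates and the Jacobians $J_{71},J_{82},J_{91}$ carry explicit $z$-dependence, whereas $E=-uv(u+v+2)$ does not. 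No cancellation can produce an exact identity, and $1/E$ does \emph{not} extend holomorphically across $\mathcal L_4^*,\mathcal L_5^*,\mathcal L_6^*$.

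The paper's argument is accordingly weaker but sufficient: it shows only the asymptotic relations
\[
E\,J_{71}\sim 1+\frac{\alpha_2}{v_{71}z},\qquad
E\,J_{82}\sim 2-\frac{1-\alpha_1-\alpha_2}{4u_{82}z},\qquad
E\,J_{91}\sim -1+\frac{\alpha_1}{v_{91}z}
\]
as one approaches the respective exceptional lines. Since the lemma asks only for a \emph{continuous} $d$, these asymptotics let the pieces be patched (they agree to leading order on overlaps), and in all later use of $d$ only ratios $d(z)/d(\zeta)\sim 1$ and bounds $|d|<\delta$ are needed. Your heuristic for the factor $1/2$ at $\mathcal L_5^*$ is also off: it is not an order-of-vanishing of $E$ but simply the limiting value $EJ_{82}\to 2$, read directly from the chart formula.
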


\begin{proof}
Assume $d$ is defined by $\frac{1}{E}$, in a neighbourhood of $\mathcal{I}\setminus(\mathcal{L}_4^*\cup\mathcal{L}_5^*\cup\mathcal{L}_6^*)$.
From Section \ref{sec:b678}, we have that the line $\mathcal{L}_4^*$ is determined by $u_{71}=0$ in the $(u_{71},v_{71})$ chart.
Thus as we approach $\mathcal{L}_4^*$, i.e., as $u_{71}\to0$, we have
$$
EJ_{71}\sim 1 + \frac{\alpha_2}{v_{71} z}
$$
which provides the second result.

From Section \ref{sec:b678}, we have that the line $\mathcal{L}_5^*$ is given by $v_{82}=0$ in the $(u_{82},v_{82})$ chart.
Thus as we approach $\mathcal{L}_5^*$:
$$
EJ_{82} \sim 2 - \frac{1-\alpha_1-\alpha_2}{4u_{82} z}
$$
which gives the third result.

From Section \ref{sec:b678}, we have that the line $\mathcal{L}_6^*$ is given by $u_{91}=0$ in the $(u_{91},v_{91})$ chart.
Then as we approach  $\mathcal{L}_6$:
$$
EJ_{91}\sim -1 + \frac{\alpha_1}{v_{91} z}
$$
which provides the fourth result.  
\end{proof}

\begin{lemma}[Behaviour near $\mathcal{L}_4^*\setminus\mathcal{L}_1^*$]\label{lemma:L4-L1}
If a solution at the complex time $z$ is sufficiently close to $\mathcal{L}_4^*\setminus\mathcal{L}_1^*$, then there exists a unique $\zeta\in\mathbb{C}$ such that:
\begin{enumerate}
\item
$v_{71}(\zeta)=0$, i.e.~ $\zeta\in\mathcal{L}_7$;
\item
$|z-\zeta|=O(|d(z)||v_{71}(z)|)$ for small $d(z)$ and bounded $|v_{71}(z)|$.
\end{enumerate} 
In other words, the solution has a pole at $z=\zeta$.

For large $R_4>0$, consider the set $\{z\in\mathbb{C}\mid |v_{71}|\le R_4\}$.
Then, its connected component containing $\zeta$ is an approximate disk $D_{4}$ with centre $\zeta$ and radius 
$|d(\zeta)|R_4$,
and $z\mapsto v_{71}(z)$ is a complex analytic diffeomorphism from that approximate disk onto $\{v\in\mathbb{C}\mid |v|\le R_4\}$.
\end{lemma}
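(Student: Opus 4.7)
The strategy parallels the treatment of the analogous lemmas for the first and second Painlev\'e equations in \cite{DJ2011,HJ2014}: work in the chart $(u_{71}, v_{71})$ covering a neighbourhood of $\mathcal{L}_4^*\setminus\mathcal{L}_1^*$, and show that the $z$-flow crosses the pole line $\mathcal{L}_7 = \{v_{71}=0\}$ transversally at a rate governed by $1/d$. In the chart the exceptional divisor $\mathcal{L}_4^*$ is $\{u_{71}=0\}$ and $d = J_{71}$ vanishes on it, so that \emph{``sufficiently close to $\mathcal{L}_4^*\setminus\mathcal{L}_1^*$''} translates into an explicit smallness hypothesis on $d(z)$ with $v_{71}(z)$ bounded.

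First, I would write out the ODE satisfied by $(u_{71}(z),v_{71}(z))$ along any trajectory of \eqref{eq:sistemz} using the blow-up formulae from Section \ref{sec:b678}, and read off the leading-order behaviour. I expect an expression of the schematic form
\begin{equation*}
\frac{du_{71}}{dz} = A(u_{71},v_{71},z), \qquad
\frac{dv_{71}}{dz} = \frac{1}{d(z)}\bigl(1+\varphi(u_{71},v_{71},z)\bigr),
\end{equation*}
with $A$ bounded and $\varphi = O(d)+O(1/z)$. The relation $EJ_{71}\sim 1+\alpha_2/(v_{71}z)$ proved in the preceding lemma makes the factor $1/d$ natural, since $E\sim 1/d$ and the vector field in the original chart scales like $E$. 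Combining the bound on $zE'/E$ from Lemma~\ref{lemma:E'/E} with $Ed \sim 1$ gives $d'/d = -\tfrac{3}{2z}+O(1/z)$, so $d$ is effectively constant over the short $z$-disk we need.

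Second, I would freeze $d$ at its initial value $d_0 := d(z)$ and pass to the rescaled complex time $\tau = (w-z)/d_0$. The leading system becomes autonomous,
\begin{equation*}
\frac{dv_{71}}{d\tau} = 1, \qquad \frac{du_{71}}{d\tau} = d_0\,A,
\end{equation*}
whose flow moves $v_{71}$ at unit speed while holding $u_{71}$ essentially fixed. A Picard iteration, or equivalently the holomorphic implicit function theorem applied to the equation $v_{71}(w)=0$, then produces a unique $\zeta$ with $v_{71}(\zeta)=0$ satisfying $|z-\zeta| = |d(z)|\,|v_{71}(z)|\,(1+o(1))$. The same iteration identifies the connected component $D_4$ of $\{|v_{71}|\le R_4\}$ containing $\zeta$ with an approximate disk of radius $|d(\zeta)|R_4$ around $\zeta$, and shows that $w\mapsto v_{71}(w)$ is a biholomorphism from $D_4$ onto $\{|v|\le R_4\}$.

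The main obstacle is the bootstrap needed to run this argument over the full disk $D_4$ rather than just an infinitesimal neighbourhood of $z$: a priori the trajectory might leave the $(u_{71},v_{71})$ chart, drift towards $\mathcal{L}_1^*$, or allow $d$ to change by more than a factor $1+o(1)$, invalidating the freezing of $d_0$. The fix is a standard continuity argument in $\tau$, extending the approximate solution up to the first exit time from a fixed small neighbourhood of the initial point; the explicit bounds on $zE'/E$ from Lemma~\ref{lemma:E'/E} together with the slow variation of $d$ imply that no such exit can occur before $|\tau|$ reaches $R_4$, closing the loop.
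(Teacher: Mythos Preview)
Your overall strategy matches the paper's: work in the $(u_{71},v_{71})$ chart, observe $v_{71}'\sim 1/u_{71}=-1/d$, freeze $d$ over a short $z$-disk, and integrate to get $v_{71}(z)\sim v_{71}(\zeta)+(z-\zeta)/u_{71}$, from which the pole point and the biholomorphism onto $\{|v|\le R_4\}$ follow. The rescaled-time picture you describe is exactly what the paper does in slightly different language.

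There is, however, a genuine gap in how you justify that $d$ is effectively constant. You invoke Lemma~\ref{lemma:E'/E} together with $Ed\sim 1$ to conclude $d'/d=-\tfrac{3}{2z}+O(1/z)$. Neither ingredient is available near $\mathcal{L}_4^*$. Lemma~\ref{lemma:E'/E} controls $E'/E$ only near $\mathcal{L}_0^*$ and near compact subsets of $\mathcal{L}_i^*\setminus\mathcal{L}_{i+3}^*$ for $i=1,2,3$; it says nothing near $\mathcal{L}_4^*$, and indeed $E$ has a simple pole at $\zeta$ so $E'/E$ is unbounded there. Likewise, the relation $EJ_{71}\sim 1+\alpha_2/(zv_{71})$ shows that $Ed\sim 1$ \emph{fails} precisely at the pole line $v_{71}=0$ you are trying to cross, so you cannot differentiate it to transfer control of $E'/E$ to $d'/d$.

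The paper's fix is to compute $J_{71}'/J_{71}$ directly from the flow in the $(u_{71},v_{71})$ chart, obtaining
\[
\frac{J_{71}'}{J_{71}} = 2 + \frac{3}{2z} + O(u_{71}).
\]
Note the leading term is the constant $2$, not $O(1/z)$ as you claim. This still gives $d(z)/d(\zeta)\sim 1$, because the disk $D_4$ has radius $\sim |d(\zeta)|R_4$, which is small in absolute terms; integrating yields $J_{71}(z)=J_{71}(\zeta)\,e^{2(z-\zeta)}(z/\zeta)^{3/2}(1+o(1))$ and the exponential factor is $1+o(1)$ on that scale. Replace your appeal to Lemma~\ref{lemma:E'/E} by this direct computation and the rest of your argument goes through.
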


\begin{proof}
For the study of the solutions near $\mathcal{L}_4^*\setminus\mathcal{L}_1^*$, we use the coordinates $(u_{71},v_{71})$. In this chart, the line $\mathcal{L}_4^*\setminus\mathcal{L}_1^*$ is given by the equation $u_{71}=0$ and parametrized by $v_{71}\in\mathbb{C}$ (see Section \ref{sec:b678}). Moreover, $\mathcal{L}_7^*$ (without one point) is given by $v_{71}=0$ and parametrized by $u_{71}\in\mathbb{C}$. (Equivalent arguments in the alternative chart $(u_{72},v_{72})$ cover the missing point of $\mathcal{L}_7^*$.) 

Asymptotically, for $u_{71}\to0$, and bounded $v_{71}$, ${1}/{z}$, we have
\begin{subequations}
\begin{align}
 v_{71}'&\sim\frac{1}{u_{71}},\label{eq:71-a}
 \\
J_{71}&=-u_{71},\label{eq:71-b}
\\
\frac{J_{71}'}{J_{71}}&=2+\frac{3}{2z}+O(u_{71})=2+\frac{3}{2z}+O(J_{71}),\label{eq:71-c}
\\
EJ_{71}&\sim1+\frac{\alpha_2}{z\,v_{71}}.\label{eq:71-d}
\end{align}
\end{subequations}
Integrating (\ref{eq:71-c}) from $\zeta$ to $z$, we get
$$
J_{71}(z)=J_{71}(\zeta)e^{2(z-\zeta)}\left(\frac{z}{\zeta}\right)^{3/2}(1+o(1)),
\quad
\text{for}\ \frac{z}{\zeta}\sim1.
$$
Hence, using Equation (\ref{eq:71-b}), $u_{71}$ is approximately equal to a small constant, and from (\ref{eq:71-a}) it follows that:
$$
v_{71}(z)\sim v_{71}(\zeta)+\frac{z-\zeta}{u_{71}}.
$$
Thus, if $z$ runs over an approximate disk $D$ centred at $\zeta$ with radius $|u_{71}|R$, then $v_{71}$ fills an approximate disk centred at $v_{71}(\zeta)$ with radius $R$.
Therefore, if $u_{71}(\zeta)\ll 1/\zeta$, for $z\in D$, the solution satisfies
$$
\frac{u_{71}(z)}{u_{71}(\zeta)}\sim1,
$$
and $v_{71}(z)$ is a complex analytic diffeomorphism from $D$ onto an approximate disk with centre $v_{71}(\zeta)$ and radius $R$.
If $R$ is sufficiently large, we will have $0\in v_{71}(D)$, i.e.~ the solution of the Painlev\'e equation will have a pole at a unique point in $D$.

Now, it is possible to take $\zeta$ to be the pole point.
For $|z-\zeta|\ll|\zeta|$, we have:
\begin{gather*}
\frac{d(z)}{d(\zeta)}\sim1,
\quad\text{i.e.}\quad
-\frac{u_{71}(z)}{d(\zeta)}\sim\frac{J_{71}(z)}{d(\zeta)}\sim1,
\end{gather*}
and 
\[
v_{71}(z)\sim \frac{z-\zeta}{u_{71}}\sim-\frac{z-\zeta}{d(\zeta)}.
\]
Let $R_4$ be a large positive real number.
Then the equation $|v_{71}(z)|=R_4$ corresponds to $|z-\zeta|\sim|d(\zeta)|R_4$, which is still small compared to $|\zeta|$ if 
$|d(\zeta)|$ is sufficiently small.
Denote by $D_4$ the connected component of the set of all $z\in\mathbb{C}$ such that $\{z\mid|v_{71}(z)|\le R_4\}$ is an approximate disk with centre $\zeta$ and radius $|d(\zeta)|R_4$.

More precisely, $v_{71}$ is a complex analytic diffeomorphism from $D_4$ onto $\{v\in\mathbb{C}\mid|v|\le R_4\}$, and 
$$
\frac{d(z)}{d(\zeta)}\sim1
\quad\text{for all}\quad 
z\in D_4.
$$
The function $E(z)$ has a simple pole at $z=\zeta$. 
From (\ref{eq:71-d}), we have:
\begin{gather*}
E(z)J_{71}(z)\sim1
\quad
\text{when}
\quad
1\gg\frac{1}{|zv_{71}(z)|}\sim\left|\frac{u_{71}(\zeta)}{\zeta(z-\zeta)}\right|=\frac{|d(\zeta)|}{|\zeta(z-\zeta)|},
\end{gather*}
that is, when
\[
|z-\zeta|\gg\frac{|d(\zeta)|}{|\zeta|}.
\]

We assume $R_4\gg \frac{1}{|\zeta|}$ and, therefore, we have
$$
|z-\zeta|\sim |d(\zeta)|R_4\gg\frac{|d(\zeta)|}{|\zeta|}.
$$
Thus $E(z)J_{71}(z)\sim1$ for the annular disk $z\in D_4\setminus D_4'$, where $D_4'$ is a disk centred at $\zeta$ with small radius compared to the radius of $D_4$.  
\end{proof}

\begin{lemma}[Behaviour near $\mathcal{L}_1^*\setminus\mathcal{L}_0^*$]\label{lemma:L1-L0}
For large finite $R_1>0$, consider the set of all $z\in\mathbb{C}$, such that the solution at complex time $z$ is close to $\mathcal{L}_1^*\setminus\mathcal{L}_0^*$,
with $|v_{41}(z)|\le R_1$, but not close to $\mathcal{L}_4^*$.
Then this set is the complement of $D_4$ in an approximate disk $D_1$ with centre at $\zeta$ and radius $\sim\sqrt{|d(\zeta)|R_1}$.
More precisely, $z\mapsto v_{41}$ defines a $2$-fold covering from the annular domain $D_1\setminus D_4$ onto the complement in
$\{u\in\mathbb{C}\mid|u|\le R_1\}$ of an approximate disk with centre at the origin and small radius $\sim|d(\zeta)|R_4^2$,
where $v_{41}(z)\sim-d(\zeta)(z-\zeta)^2$.
\end{lemma}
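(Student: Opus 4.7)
The plan is to work in the coordinate chart $(u_{41},v_{41})$ of Section \ref{sec:b345}, in which the line $\mathcal{L}_1^*\setminus\mathcal{L}_0^*$ is cut out by $u_{41}=0$ and parametrised by $v_{41}\in\mathbb{C}$, while $\mathcal{L}_4^*$ is transverse to $\mathcal{L}_1^*$ and visible in the same chart. This chart differs from the $(u_{71},v_{71})$ chart of Lemma \ref{lemma:L4-L1} only by the blow-up at the base point $b_6\in\mathcal{L}_4$, so there are explicit rational transition maps expressing $(u_{41},v_{41})$ in terms of $(u_{71},v_{71})$, which I would use to transplant the asymptotics derived at the pole $\zeta$ in Lemma \ref{lemma:L4-L1} into the new chart.

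First I would substitute the formulas $u_{71}(z)\sim -d(\zeta)$ and $v_{71}(z)\sim -(z-\zeta)/d(\zeta)$ from Lemma \ref{lemma:L4-L1} into this chart change. On a small punctured neighbourhood of $\zeta$, this bookkeeping should produce a leading-order expression for $v_{41}(z)$ that is quadratic in $z-\zeta$ with coefficient determined by $d(\zeta)$, together with the initial condition $v_{41}(\zeta)=0$. To extend the formula from this small neighbourhood (the part of $D_1$ nearest $\zeta$) to all of $D_1$, I would analyse the ODE in the $(u_{41},v_{41})$ chart directly. The dominant balance close to $\mathcal{L}_1^*$ should parallel that of Lemma \ref{lemma:L4-L1}: $u_{41}$ is adiabatically slow and essentially tracks $-d(z)\sim -d(\zeta)$, while $v_{41}'$ is, to leading order, linear in $z-\zeta$. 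Integrating with initial data at $\zeta$ and absorbing the $1/z$-corrections by a Gr\"onwall-type estimate on $|z-\zeta|\ll|\zeta|$ should then deliver the stated formula for $v_{41}(z)$ throughout $D_1$.

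Once the leading-order formula is in hand, the equation $|v_{41}(z)|=R_1$ cuts out, to leading order, a circle of the claimed radius $\sim\sqrt{|d(\zeta)|R_1}$ about $\zeta$, and its enclosed component is $D_1$. Because the local model is of the form $z\mapsto c\,(z-\zeta)^2$, the restriction of $v_{41}$ to $D_1$ is a degree-two branched cover of $\{|v|\le R_1\}$ ramified only at $\zeta$; this can be certified by applying the argument principle (or Rouch\'e) to $v_{41}(z)-w$ for fixed $w$ with $0<|w|\le R_1$. Excising the subdisk $D_4$ of Lemma \ref{lemma:L4-L1} removes both the ramification point $\zeta$ and the fibre above $v_{41}=0$, because in the $v_{41}$ coordinate $D_4$ occupies a neighbourhood of $0$ of radius $\sim|d(\zeta)|R_4^2$. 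What remains is an honest unbranched $2$-fold cover of the prescribed annular target, as claimed.

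The step most likely to require care is the propagation of the approximation from the tiny neighbourhood of $\zeta$, where the chart change hands us the asymptotics for free, to the whole of $D_1$, whose radius is much larger than that of $D_4$. Uniform control of the drift of $u_{41}$ away from $-d(\zeta)$ and of the accumulated non-autonomous $1/z$ contributions over the larger disk is the technical heart of the argument; the smallness of $|d(\zeta)|$ together with the largeness of $|\zeta|$ is what keeps $|z-\zeta|/|\zeta|$ uniformly $o(1)$ on $D_1$ and allows all the error terms to close.
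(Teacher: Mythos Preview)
Your chart choice and overall architecture are right, and the covering argument at the end is fine. The gap is in your dominant balance for the extension from $D_4$ to $D_1$. You claim that in the $(u_{41},v_{41})$ chart ``$u_{41}$ is adiabatically slow and essentially tracks $-d(z)\sim -d(\zeta)$''. This is false: the relation $J_{71}=-u_{71}$ that makes $u_{71}\sim -d$ in Lemma~\ref{lemma:L4-L1} has no analogue here. In the present chart the Jacobian is $J_{41}=-u_{41}^2v_{41}$, and it is $u_{41}^2v_{41}$, not $u_{41}$, that is approximately $-d(\zeta)$. In fact $u_{41}'\sim -1/v_{41}$ is not small, and over $D_1\setminus D_4$ one has $u_{41}\sim d(\zeta)/(z-\zeta)$, which varies by an unbounded factor as $|d(\zeta)|\to0$. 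If $u_{41}$ were constant, then $v_{41}'\sim 2/u_{41}$ would give $v_{41}$ linear in $z-\zeta$, contradicting the quadratic behaviour you are trying to establish; so your stated balance is internally inconsistent.

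What the paper does instead is control $E$ directly: from $E'/E\sim -3/(2z)-\alpha_2/(v_{41}z)$ and $u_{41}'\sim -1/v_{41}$ one gets $E'/E\sim -3/(2z)+(\alpha_2/z)\,u_{41}'$, which integrates to $E(z)/E(z_0)\sim 1$ on paths with $|z-z_0|\ll|z_0|$ and $|u_{41}|\ll|z_0|$. Then $EJ_{41}\sim 1$ with $J_{41}=-u_{41}^2v_{41}$ yields the algebraic constraint $u_{41}^{-1}\sim(-v_{41}/d(\zeta))^{1/2}$, and plugging this into $v_{41}'\sim 2/u_{41}$ gives the separable equation $\frac{d}{dz}(v_{41}^{1/2})\sim(-d(\zeta))^{-1/2}$, whose integral is the claimed $v_{41}(z)\sim -(z-\zeta)^2/d(\zeta)$. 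Your chart-change computation on $D_4$ is a correct way to \emph{seed} this, but the propagation requires identifying the right slow quantity, namely $E$ (equivalently $u_{41}^2v_{41}$), not $u_{41}$.
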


\begin{proof}
Set $\mathcal{L}_1^*\setminus\mathcal{L}_0^*$ is visible in the chart $(u_{41},v_{41})$, where it is given by the equation $u_{41}=0$ and parametrized by $v_{41}\in\mathbb{C}$, see Section \ref{sec:b345}.
In that chart, the line $\mathcal{L}_4^*$ (without one point) is given by the equation $v_{41}=0$ and parametrized by $u_{41}\in\mathbb{C}$.

For $u_{41}\to0$ and bounded $v_{41}$ and ${1}/{z}$, we have:
\begin{subequations}
\begin{align}
 u_{41}'&\sim-\frac{1}{v_{41}},\label{eq:41-a}
 \\
v_{41}'&\sim\frac{2}{u_{41}},\label{eq:41-b}
 \\
J_{41}&=-u_{41}^2v_{41},\label{eq:41-c}
\\
EJ_{41}&\sim1,\label{eq:41-d}
\\
\frac{E'}{E}&\sim-\frac{3}{2z}-\frac{\alpha_2}{v_{41}z}.\label{eq:41-e}
\end{align}
\end{subequations}
From (\ref{eq:41-e}) and (\ref{eq:41-a}), we get:
$$
\frac{E'}{E}\sim-\frac{3}{2z}+\frac{\alpha_2}{z}u_{41}'.
$$
Integrating from $z_0$ to $z_1$, we obtain:
$$
\log\frac{E(z_1)}{E(z_0)}
\sim
\log\left(\frac{z_1}{z_0}\right)^{-3/2}+
\alpha_2\left(\frac{u_{41}(z_1)}{z_1}-\frac{u_{41}(z_0)}{z_0}+\int_{z_0}^{z_1}\frac{u_{41}(z)}{z^2}dz\right).
$$
Therefore $E(z_1)/E(z_0)\sim1$, if for all $z$ on the segment from $z_0$ to $z_1$ we have $|z-z_0|\ll|z_0|$ and $|u_{41}(z)|\ll|z_0|$.
We choose $z_0$ on the boundary of $D_4$ from the proof of Lemma \ref{lemma:L4-L1}.
Then we have
$$
\frac{d(\zeta)}{d(z_0)}\sim E(z_0)d(\zeta)\sim E(z_0)J_{71}(z_0)\sim1
\quad\text{and}\quad
|v_{71}(z_0)|=R_4,
$$
which implies that 
$$
|u_{41}|=\left|\frac{1}{v_{71}+\frac{\alpha_2}{z}}\right|\sim\frac{1}{R_4}\ll1.
$$
Furthermore, equations (\ref{eq:41-c}) and (\ref{eq:41-d}) imply that:
$$
|v_{41}(z_0)|=\frac{|J_{41}(z_0)|}{|u_{41}(z_0)|^2}\sim|d(\zeta)|R_4^2,
$$
which is small when $|d(\zeta)|$ is sufficiently small.

Since $D_4$ is an approximate disk with centre $\zeta$ and small radius approximately equal to $|d(\zeta)|R_4$,
and $R_4\gg|\zeta|^{-1}$, we have that $|v_{71}(z)|\ge R_4\gg1$. Writing $z=\zeta+r(z_0-\zeta),\ r\ge1$, where $r\ge 1$, we have $u_{41}(z)\ll 1$ and
$$
\frac{|z-z_0|}{|z_0|}=(r-1)\left|1-\frac{\zeta}{z_0}\right|\ll1
\quad\text{if}\quad
r-1\ll\frac{1}{|1-\frac{\zeta}{z_0}|}.
$$
Then equations (\ref{eq:41-c}), (\ref{eq:41-d}) and $E\sim d(\zeta)^{-1}$ yield
$$
u_{41}^{-1}\sim\left(-\frac{v_{41}}{d(\zeta)}\right)^{1/2},
$$
which in combination with (\ref{eq:41-b}) leads to
$$
\frac{d}{dz}\bigl(v_{41}^{1/2}\bigr)\sim(-d(\zeta))^{-1/2}.
$$
Hence, we get
$$
v_{41}^{1/2}\sim v_{41}(z_0)^{1/2}+(-d(\zeta))^{-1/2}(z-z_0),
$$
and therefore
$$
v_{41}(z)\sim-\frac{(z-z_0)^2}{d(\zeta)}
\quad\text{if}\quad
|z-z_0|\gg|d(\zeta)v_{41}(z_0)|^{1/2}.
$$
For large finite $R_1>0$, the equation $|v_{41}|=R_1$ corresponds to
$|z-z_0|\sim\sqrt{|d(\zeta)|R_1}$,
which is still small compared to $|z_0|\sim|\zeta|$,
and therefore
$|z-\zeta|\le|z-z_0|+|z_0-\zeta|\ll|\zeta|.$
This proves the statement of the lemma.  
\end{proof}

\begin{lemma}[Behaviour near $\mathcal{L}_5^*\setminus\mathcal{L}_2^*$]\label{lemma:L5-L2}
If a solution at the complex time $z$ is sufficiently close to $\mathcal{L}_5^*\setminus\mathcal{L}_2^*$, then there exists a unique $\zeta\in\mathbb{C}$ such that:
\begin{enumerate}
\item
$u_{82}(\zeta)=0$, i.e.~ $\zeta\in\mathcal{L}_8$;
\item
$|z-\zeta|=O(|d(z)||u_{82}(z)|)$ for small $d(z)$ and limited $|u_{82}(z)|$.
\end{enumerate} 
In other words, the solution has a pole at $z=\zeta$.

For large $R_5>0$, consider the set $\{z\in\mathbb{C}\mid |u_{82}|\le R_5\}$.
Then, its connected component containing $\zeta$ is an approximate disk $D_{5}$ with centre $\zeta$ and radius 
$|d(\zeta)|R_5$,
and $z\mapsto u_{82}(z)$ is a complex analytic diffeomorphism from that approximate disk onto $\{u\in\mathbb{C}\mid |u|\le R_5\}$.
\end{lemma}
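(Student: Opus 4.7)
The plan is to adapt the argument of Lemma \ref{lemma:L4-L1} almost verbatim, with the roles of the two coordinates swapped: work in the chart $(u_{82},v_{82})$ where, by the construction in Section \ref{sec:b678} and by Lemma \ref{lemma:E'/E}, the line $\mathcal{L}_5^*\setminus\mathcal{L}_2^*$ is given by $v_{82}=0$ and parametrised by $u_{82}\in\mathbb{C}$, while $\mathcal{L}_8^*$ (up to one point, which can be recovered in the alternate chart $(u_{81},v_{81})$) is given by $u_{82}=0$ and parametrised by $v_{82}\in\mathbb{C}$. The function $d$ on a neighbourhood of $\mathcal{L}_5^*\setminus\mathcal{L}_2^*$ is $J_{82}/2$, so $v_{82}$ plays here the role played by $u_{71}$ before.

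First I would write down the asymptotic form of the vector field in this chart as $v_{82}\to 0$ with $u_{82}$ and $1/z$ bounded. By direct substitution (analogous to the computations producing \eqref{eq:71-a}--\eqref{eq:71-d}) one obtains relations of the form
\begin{equation*}
u_{82}'\sim \frac{\text{const}}{v_{82}},\qquad
J_{82}\sim -2\,v_{82},\qquad
\frac{J_{82}'}{J_{82}}=2+\frac{3}{2z}+O(J_{82}),
\end{equation*}
together with $EJ_{82}\sim 2-(1-\alpha_1-\alpha_2)/(4u_{82}z)$ from the proof of Lemma~4.2. Integrating the logarithmic derivative of $J_{82}$ from $\zeta$ to $z$, as in the reference lemma, gives $J_{82}(z)=J_{82}(\zeta)e^{2(z-\zeta)}(z/\zeta)^{3/2}(1+o(1))$, so $v_{82}$ is essentially constant and small; substituting into the equation for $u_{82}'$ yields the linear approximation $u_{82}(z)\sim u_{82}(\zeta)+(z-\zeta)/v_{82}$.

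Next I would use this linear approximation to show that if $z$ runs over an approximate disk about $\zeta$ of radius $|v_{82}|R$, then $u_{82}(z)$ covers an approximate disk of radius $R$ around $u_{82}(\zeta)$. For $R$ large enough the origin lies in the image, so there is a unique $\zeta$ with $u_{82}(\zeta)=0$, i.e.\ $\zeta\in\mathcal{L}_8^*$; this is the pole of $(u,v)$. Re-centering at the pole and using $d(\zeta)=J_{82}(\zeta)/2\sim -v_{82}(\zeta)$, I get $u_{82}(z)\sim -(z-\zeta)/d(\zeta)$, which is precisely the estimate $|z-\zeta|=O(|d(z)||u_{82}(z)|)$ once it is noted that $d(z)/d(\zeta)\sim 1$ on the region under consideration.

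Finally, given $R_5>0$ large, the equation $|u_{82}(z)|=R_5$ translates into $|z-\zeta|\sim |d(\zeta)|R_5$, and the connected component $D_5$ of $\{|u_{82}|\le R_5\}$ containing $\zeta$ is therefore an approximate disk of that radius; the linear formula above shows $u_{82}$ is a complex analytic diffeomorphism from $D_5$ onto $\{|u|\le R_5\}$. The main technical point to verify carefully is the exact factor in $J_{82}\sim -2v_{82}$, since $d$ is normalised by $J_{82}/2$ and a wrong constant would distort both the radius $|d(\zeta)|R_5$ and the claim $E(z)J_{82}(z)\sim 2$ on the annulus $D_5\setminus D_5'$; this is the analogue of the $EJ_{71}\sim 1$ step in Lemma \ref{lemma:L4-L1} and has to be checked on the region where $|z-\zeta|\gg|d(\zeta)|/|\zeta|$, requiring the assumption $R_5\gg|\zeta|^{-1}$.
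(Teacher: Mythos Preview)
Your overall strategy is exactly the paper's: work in the chart $(u_{82},v_{82})$, use the asymptotics of $u_{82}'$, $J_{82}$, $J_{82}'/J_{82}$ and $EJ_{82}$ as $v_{82}\to 0$, integrate the logarithmic derivative to freeze $v_{82}$, and then read off the linear behaviour of $u_{82}$. However, two of your asymptotic claims are wrong, and one of them is not just a constant.

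First, the constants: from the explicit Jacobian in Section~\ref{sec:b678} one finds $J_{82}\sim v_{82}/8$ (not $-2v_{82}$) and $u_{82}'\sim 8/v_{82}$. With $d=J_{82}/2\sim v_{82}/16$ this gives $u_{82}(z)\sim(z-\zeta)/(2d(\zeta))$ and an approximate radius $2|d(\zeta)|R_5$; your flagged ``main technical point'' is indeed where you slip.

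Second, and more substantively, the analogy with Lemma~\ref{lemma:L4-L1} is \emph{not} verbatim at the level of $J_{82}'/J_{82}$. In the $\mathcal{L}_4^*$ case one genuinely has $J_{71}'/J_{71}=2+3/(2z)+O(J_{71})$, so integrating is immediate. Here the paper computes
\[
\frac{J_{82}'}{J_{82}}=-2+\frac{8-5(\alpha_1+\alpha_2)}{2z}+24\,u_{82}+\frac{3(1-\alpha_1-\alpha_2)}{2z\,u_{82}^{2}},
\]
which contains $u_{82}$-dependent terms that are \emph{not} $O(J_{82})$ (and the last one blows up as $u_{82}\to 0$). Your claimed form $2+3/(2z)+O(J_{82})$ is therefore false, and your integration step as written does not go through. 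The paper deals with this by packaging the extra terms into a factor $e^{K(z-\zeta)}$ with $K$ evaluated at an intermediate point $\tilde\zeta$ on the path; since the integration range $|z-\zeta|$ is of order $|v_{82}|R$, this factor is still $1+o(1)$ provided $u_{82}$ stays bounded along the way, and one bootstraps to reach $u_{82}=0$. You need to incorporate this step explicitly rather than asserting the simpler form.
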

\begin{proof}
For the study of solutions near $\mathcal{L}_5^*\setminus\mathcal{L}_2^*$, we use the coordinates $(u_{82},v_{82})$.
The line $\mathcal{L}_5^*\setminus\mathcal{L}_2^*$ is given by the equation $v_{82}=0$ and parametrized by $u_{82}\in\mathbb{C}$; see Section \ref{sec:b678}.
Moreover, $\mathcal{L}_8(z)$ (without one point), is given by $u_{82}=0$ and parametrized by $v_{82}\in\mathbb{C}$.
Asymptotically, for $v_{82}\to0$, and bounded $u_{82}$, $1/z$, we have:
\begin{subequations}
\begin{align}
u_{82}'&\sim\frac{8}{v_{82}},\label{eq:82-a}
\\
J_{82}&\sim\frac{v_{82}}{8},\label{eq:82-b}
\\
\frac{J_{82}'}{J_{82}}&=-2+\frac{8-5(\alpha_1+\alpha_2)}{2z}+24u_{82}+\frac{3(1-\alpha_1-\alpha_2)}{2zu_{82}^2},\label{eq:82-c}
\\
EJ_{82}&\sim2-\frac{1-\alpha_1-\alpha_2}{4u_{82}z}.\label{eq:82-d}
\end{align}
\end{subequations}
Integrating (\ref{eq:82-c}) from $\zeta$ to $z$, we get:
\begin{gather*}
J_{82}(z)=J_{81}(\zeta)e^{K(z-\zeta)}\left(\frac{z}{\zeta}\right)^{(8-5(\alpha_1+\alpha_2))/2}(1+o(1)),
\\
K=-2+24u_{82}(\tilde{\zeta})+\frac{3(1-\alpha_1-\alpha_2)}{2\tilde{\zeta}u_{82}^2(\tilde{\zeta})},
\end{gather*}
where $\tilde{\zeta}$ is on the integration path.

Because of (\ref{eq:82-b}), $v_{82}$ is approximately equal to a small constant, and from (\ref{eq:82-a}) follows that:
$$
u_{82}\sim u_{82}(\zeta)+8\frac{(z-\zeta)}{v_{82}(\zeta)}.
$$
Thus, if $z$ runs over an approximate disk $D$ centred at $\zeta$ with radius $\frac18|v_{82}|R$, then $u_{82}$ fills an approximate disk centred at $u_{82}(\zeta)$ with radius $R$.
Therefore, if $v_{82}(\zeta)\ll\zeta$, the solution has the following properties for $z\in D$:
$$
\frac{v_{82}(z)}{v_{82}(\zeta)}\sim1,
$$
and $u_{82}$ is a complex analytic diffeomorphism from $D$ onto an approximate disk with centre $u_{82}(\zeta)$ and radius $R$.
If $R$ is sufficiently large, we will have $0\in u_{82}(D)$, i.e.~ the solution of the Painlev\'e equation  will have a pole at a unique point in $D$.

Not, it is possible to take $\zeta$ to be the pole point.
For $|z-\zeta|\ll|\zeta|$, we have:
\begin{gather*}
\frac{d(z)}{d(\zeta)}\sim1,
\quad\text{i.e.}\quad
\frac{v_{82}(z)}{16d(\zeta)}\sim\frac{J_{82}(z)}{2d(\zeta)}\sim1,
\\
u_{82}(z)\sim\frac{8(z-\zeta)}{v_{82}}\sim\frac{z-\zeta}{2d(\zeta)}.
\end{gather*}
Let $R_5$ be a large positive real number.
Then the equation $|u_{82}(z)|=R_5$ corresponds to $|z-\zeta|\sim2|d(\zeta)|R_5$, which is still small compared to $|\zeta|$ if $|d(\zeta)|$ is sufficiently small.
Denote by $D_5$ the connected component of the set of all $z\in\mathbb{C}$ such that $\{z\mid|u_{82}(z)|\le R_5\}$ is an approximate disk with centre $\zeta$ and radius $2|d(\zeta)|R_5$.
More precisely, $u_{82}$ is a complex analytic diffeomorphism from $D_5$ onto $\{u\in\mathbb{C}\mid|u|\le R_5\}$, and
$$
\frac{d(z)}{d(\zeta)}\sim1
\quad\text{for all}\quad
z\in D_5.
$$
The function $E(z)$ has a simple pole at $z=\zeta$.
From (\ref{eq:82-d}), we have:
\begin{gather*}
E(z)J_{82}(z)\sim2
\quad\text{when}\quad
1\gg\frac{1}{|zu_{82}(z)|}\sim\left|\frac{v_{82}(\zeta)}{8\zeta(z-\zeta)}\right|\sim\frac{|d(\zeta)|}{|z-\zeta|},
\end{gather*}
that is when
$|z-\zeta|\gg\frac{|d(\zeta)|}{|\zeta|}$.

Since $R_5\ll {1}/{|\zeta|}$, the approximate radius of $D_5$ is given by
$$
|d(\zeta)|R_5\gg\frac{|d(\zeta)|}{|\zeta|}
$$
Thus $E(z)J_{82}(z)\sim2$ for $z\in D_5\setminus D_5'$, where $D_5'$ is a disk centred at $\zeta$ with small radius compared to the radius of $D_5$.  
\end{proof}

\begin{lemma}[Behaviour near $\mathcal{L}_2^*\setminus\mathcal{L}_0^*$]
For large finite $R_2>0$, consider the set of all $z\in\mathbb{C}$, such that the solution at complex time $z$ is close to $\mathcal{L}_2^*\setminus\mathcal{L}_0^*$,
with $|u_{52}(z)|\le R_2$, but not close to $\mathcal{L}_5^*$.
Then that set is the complement of $D_5$ in an approximate disk $D_2$ with centre at $\zeta$ and radius $\sim\sqrt{|d(\zeta)|R_2}$.
More precisely, $z\mapsto u_{52}$ defines a $2$-fold covering from the annular domain $D_2\setminus D_5$ onto the complement in
$\{u\in\mathbb{C}\mid|u|\le R_2\}$ of an approximate disk with centre at the origin and small radius $\sim|d(\zeta)|R_5^2$,
where $u_{52}(z)\sim-d(\zeta)(z-\zeta)^2$.
\end{lemma}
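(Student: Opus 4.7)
The plan is to mirror the proof of Lemma \ref{lemma:L1-L0} in the coordinate chart adapted to $\mathcal{L}_2^*$, with the roles of $u$ and $v$ essentially interchanged. In the chart $(u_{52},v_{52})$ visible near $\mathcal{L}_2^*\setminus\mathcal{L}_0^*$, the exceptional line $\mathcal{L}_2^*$ is locally given by $v_{52}=0$ (parametrized by $u_{52}\in\mathbb{C}$), while $\mathcal{L}_5^*$ (without one point) is locally given by $u_{52}=0$ (parametrized by $v_{52}\in\mathbb{C}$). A direct computation parallel to equations (\ref{eq:41-a})--(\ref{eq:41-e}) yields, in the limit $v_{52}\to 0$ with $u_{52}$ and $1/z$ bounded, asymptotic relations of the form
\begin{equation*}
 v_{52}'\sim-\frac{c_1}{u_{52}},\qquad u_{52}'\sim\frac{c_2}{v_{52}},\qquad J_{52}=c_3\,u_{52}^2 v_{52},\qquad EJ_{52}\sim c_4,\qquad \frac{E'}{E}\sim-\frac{3}{2z}+\frac{c_5}{z}\,v_{52}',
\end{equation*}
for explicit constants $c_1,\dots,c_5$ coming from the blow-up formulas in Section \ref{sec:b345}.

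Next I would pick $z_0$ on the boundary $\partial D_5$ produced by Lemma \ref{lemma:L5-L2}. On $\partial D_5$ we have $|u_{82}(z_0)|=R_5$ and $E(z_0)J_{82}(z_0)\sim 2$, so translating into the $(u_{52},v_{52})$ chart (via the change of coordinates recorded in Appendix \ref{sec:b345}) one gets $E(z_0)\sim 1/d(\zeta)$, together with $|u_{52}(z_0)|\sim|d(\zeta)|R_5^2$ and $v_{52}(z_0)$ of order $1/R_5$, so that the small parameter $v_{52}$ is indeed small on $\partial D_5$. Integrating the fifth formula above along the segment from $z_0$ to $z$, and using $v_{52}'\sim-c_1/u_{52}$ to turn the $v_{52}'$-term into a boundary contribution (exactly as in the step leading to the estimate $E(z_1)/E(z_0)\sim 1$ in the proof of Lemma \ref{lemma:L1-L0}), gives $E(z)\sim E(z_0)\sim 1/d(\zeta)$ on the whole integration path as long as $|z-z_0|\ll|z_0|$ and $|v_{52}(z)|$ stays bounded.

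Combining $EJ_{52}\sim c_4$ with $E\sim 1/d(\zeta)$ expresses $v_{52}$ as $c_4\,d(\zeta)/(c_3\,u_{52}^2)$ up to lower order, and substituting into $u_{52}'\sim c_2/v_{52}$ collapses the equation to
\begin{equation*}
 \frac{d}{dz}\!\left(u_{52}^{1/2}\right)\sim \mathrm{const}\cdot(-d(\zeta))^{-1/2}.
\end{equation*}
Integration then yields the quadratic asymptotics
\begin{equation*}
 u_{52}(z)\sim -\frac{(z-z_0)^2}{d(\zeta)}\sim -\frac{(z-\zeta)^2}{d(\zeta)},
\end{equation*}
where the replacement of $z_0$ by $\zeta$ is valid because $|z_0-\zeta|\sim|d(\zeta)|R_5$ is much smaller than $\sqrt{|d(\zeta)|R_2}$. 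The condition $|u_{52}(z)|\le R_2$ therefore corresponds to $|z-\zeta|\lesssim\sqrt{|d(\zeta)|R_2}$, which is still small compared with $|\zeta|$; this identifies the approximate disk $D_2$ and the estimate on its radius. Finally, the square-root dependence of $z-\zeta$ on $u_{52}$ produces the $2$-fold covering $D_2\setminus D_5\to\{|u|\le R_2\}\setminus\{\text{small disk}\}$, the excised disk being the image of $\partial D_5$, which has radius $\sim|d(\zeta)|R_5^2$.

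The main obstacle is the careful matching on $\partial D_5$: one needs to verify that the asymptotic data produced at the end of the proof of Lemma \ref{lemma:L5-L2} in the $(u_{82},v_{82})$ chart translates into valid initial data for the ODE in the $(u_{52},v_{52})$ chart, and that the error terms in the integrated relation $E(z)/E(z_0)\sim 1$ are negligible compared to the leading quadratic term in $u_{52}$. Once this matching is done, the rest of the argument is a direct transcription of the proof of Lemma \ref{lemma:L1-L0} after interchanging the roles of $u$ and $v$.
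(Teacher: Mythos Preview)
Your proposal contains a genuine gap. The geometry near $\mathcal{L}_2^*$ is \emph{not} a direct transcription of the $\mathcal{L}_1^*$ case: in the $(u_{41},v_{41})$ chart the line $\mathcal{L}_0^*$ is invisible, but in the $(u_{52},v_{52})$ chart $\mathcal{L}_0^*$ is present, given by $u_{52}=\tfrac12$ (see Appendix~\ref{sec:b345}). Consequently the ``direct computation parallel to (\ref{eq:41-a})--(\ref{eq:41-e})'' does not produce the relations you postulate. In fact
\[
J_{52}=-\tfrac{1}{8}\,u_{52}(2u_{52}-1)^{3}v_{52}^{2},\qquad
v_{52}'\sim\frac{2(1-8u_{52})}{u_{52}(2u_{52}-1)},\qquad
\frac{E'}{E}\sim-\frac{2+\alpha_1+\alpha_2}{2z}-\frac{1-\alpha_1-\alpha_2}{4u_{52}z},
\]
so $J_{52}$ is quadratic in $v_{52}$ (not linear) and carries the extra factor $(2u_{52}-1)^3$, the derivative $v_{52}'$ has a second pole at $u_{52}=\tfrac12$, and the leading term of $E'/E$ is not $-3/(2z)$. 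Your ansatz $J_{52}=c_3\,u_{52}^{2}v_{52}$ and $v_{52}'\sim -c_1/u_{52}$ is therefore incorrect, and the subsequent reduction to $\tfrac{d}{dz}(u_{52}^{1/2})\sim\mathrm{const}\cdot(-d(\zeta))^{-1/2}$ breaks down.

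The paper resolves this by the further change of variable $\tilde u_{52}=u_{52}/(u_{52}-\tfrac12)$, which sends $\mathcal{L}_0^*$ to infinity. In the $(\tilde u_{52},v_{52})$ chart one recovers asymptotics structurally analogous to (\ref{eq:41-a})--(\ref{eq:41-e}); in particular $E'/E\sim-\tfrac{3}{2z}+\tfrac{1-\alpha_1-\alpha_2}{2\tilde u_{52}z}$ and $J_{52}$ becomes a regular function of $\tilde u_{52}$ near $\tilde u_{52}=0$. The integration of $E'/E$ and the reduction to $\tfrac{d}{dz}(\tilde u_{52}^{1/2})\sim(-d(\zeta))^{-1/2}$ then go through exactly as in Lemma~\ref{lemma:L1-L0}, yielding the quadratic law $\tilde u_{52}(z)\sim -(z-\zeta)^2/d(\zeta)$ and the $2$-fold covering. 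Your overall strategy is right, but it only works after this preliminary coordinate change; without it the estimates on $E(z)/E(z_0)$ and the elimination of $v_{52}$ are not valid as stated.
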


\begin{proof}
The line $\mathcal{L}_2^*$ is visible in the coordinate system $(u_{52},v_{52})$, where it is given by the equation $v_{52}=0$ and parametrized by $u_{52}\in\mathbb{C}$; see Section \ref{sec:b345}.
In that chart, line $\mathcal{L}_5^*$ without one point is given by the equation $u_{52}=0$ and parametrized by $v_{52}\in\mathbb{C}$, while the line $\mathcal{L}_0^*$ without one point is given by the equation $u_{52}=\frac12$ and also parametrized by $v_{52}\in\mathbb{C}$.
For $v_{52}\to0$ and bounded $u_{52}$ and $1/z$, we have:
$$
\begin{aligned}
u_{52}'&\sim\frac{4}{v_{52}},
\\
v_{52}'&\sim\frac{2(1-8u_{52})}{u_{52}(2u_{52}-1)},
\\
J_{52}&=-\frac{1}{8}u_{52}(2u_{52}-1)^3v_{52}^2,
\\
EJ_{52}&\sim-2,
\\
\frac{E'}{E}&\sim-\frac{2+\alpha_1+\alpha_2}{2z}-\frac{1-\alpha_1-\alpha_1}{4u_{52}z}.
\end{aligned}
$$

We introduce the following coordinate change for convenience in order to make $\mathcal{L}_0^*$ invisible in the chart:
$$
\tilde{u}_{52}=\frac{u_{52}}{u_{52}-\frac12}.
$$
Now, in the $(\tilde{u}_{52},v_{52})$ coordinate system, $\mathcal{L}_2^*\setminus\mathcal{L}_0^*$ is given by the equation $v_{52}=0$ and parametrized by $\tilde{u}_{52}\in\mathbb{C}$, while the line $\mathcal{L}_5^*$ without one point is given by the equation $\tilde{u}_{52}=0$ and parametrized by $v_{52}\in\mathbb{C}$.

For $v_{52}\to0$ and bounded $\tilde{u}_{52}$ and $\frac{1}{z}$, we have:
\begin{subequations}
\begin{align}
 \tilde{u}_{52}'&\sim-\frac{8(\tilde{u}_{52}-1)^2}{v_{52}},\label{eq:52-a}
 \\
v_{52}'&\sim-\frac{4}{\tilde{u}_{52}}+8-12\tilde{u}_{52},\label{eq:52-b}
 \\
J_{52}&=-\frac{1}{16}\frac{\tilde{u}_{52}v_{52}^2}{(\tilde{u}_{52}-1)^4},\label{eq:52-c}
 \\
 EJ_{52}&\sim-2,\label{eq:52-d}
 \\
\frac{E'}{E}&\sim-\frac{3}{2z}+\frac{1-\alpha_1-\alpha_2}{2\tilde{u}_{52}z}.\label{eq:52-e}
\end{align}
\end{subequations}
We also have:
$$
\begin{aligned}
\tilde{J}_{52}&=\frac{\partial\tilde{u}_{52}}{\partial u}\frac{\partial v_{52}}{\partial v}-\frac{\partial\tilde{u}_{52}}{\partial v}\frac{\partial v_{52}}{\partial u}=\frac{\tilde{u}_{52}v_{52}^2}{8(1-\tilde{u}_{52})^2},\\
E\tilde{J}_{52}&=2(1-\tilde{u}_{52})(2-2\tilde{u}_{52}+\tilde{u}_{52}v_{52}).
\end{aligned}
$$

From (\ref{eq:52-e}) and (\ref{eq:52-b}), we get:
$$
\begin{aligned}
\frac{E'}{E}&\sim-\frac{3}{2z}-\frac{1-\alpha_1-\alpha_2}{8z}v_{52}'+\frac{1-\alpha_1-\alpha_2}{z}-3\frac{1-\alpha_1-\alpha_2}{2z}\tilde{u}_{52}.
\end{aligned}
$$
Integrating from $z_0$ to $z_1$, we obtain:
$$
\begin{aligned}
\log\frac{E(z_1)}{E(z_0)}
\sim&
\log\left(\frac{z_1}{z_0}\right)^{-1/2-\alpha_1-\alpha_2}
\\
&
-\frac{1-\alpha_1-\alpha_2}{8}
\left(
\frac{v_{52}(z_1)}{z_1}-\frac{v_{52}(z_0)}{z_0}+\int_{z_0}^{z_1}\frac{v_{52}(z)}{z^2}dz
\right.
\\
&\qquad\qquad\qquad
\left.
+12\int_{z_0}^{z_1}\frac{\tilde{u}_{52}(z)}{z}dz
\right).
\end{aligned}
$$
Therefore $E(z_1)/E(z_0)\sim1$, if for all $z$ on the segment from $z_0$ to $z_1$ we have $|z-z_0|\ll|z_0|$ and $|v_{52}(z)|\ll|z_0|$,
$|\tilde{u}_{52}(z)|\ll|z_0|$.
We choose $z_0$ on the boundary of $D_5$ from the proof of Lemma \ref{lemma:L5-L2}.
Then we have
$$
\frac{d(\zeta)}{d(z_0)}\sim E(z_0)d(\zeta)\sim E(z_0)\frac{J_{82}(z_0)}2\sim1
\quad\text{and}\quad
|u_{82}(z_0)|=R_5,
$$
which implies that 
$$
|v_{52}|=\left|\frac{1}{u_{82}+\frac{1-\alpha_1-\alpha_2}{8z}}\right|\sim\frac{1}{R_5}\ll1.
$$
Furthermore, equations (\ref{eq:52-c}) and (\ref{eq:52-d}) imply that:
$$
\left|\frac{\tilde{u}_{52}(z_0)}{(\tilde{u}_{52}-1)^4}\right|=\frac{16|J_{52}(z_0)|}{|v_{52}(z_0)|^2}\sim8|d(\zeta)|R_5^2,
$$
which is small when $|d(\zeta)|$ is sufficiently small.

Since $D_5$ is an approximate disk with centre $\zeta$ and small radius $\sim|d(\zeta)|R_5$,
and $R_5\gg|\zeta|^{-1}$, we have that $|u_{82}(z)|\ge R_5\gg1$ hence:
$$
|\tilde{u}_{52}(z)|\ll1\quad \text{if} \quad z=\zeta+r(z_0-\zeta),\ r\ge1,
$$
and
$$
\frac{|z-z_0|}{|z_0|}=(r-1)\left|1-\frac{\zeta}{z_0}\right|\ll1
\quad\text{if}\quad
r-1\ll\frac{1}{|1-\frac{\zeta}{z_0}|}.
$$
Then equations (\ref{eq:52-c}), (\ref{eq:52-d}) and $E\sim d(\zeta)^{-1}$ yield
$$
v_{52}^{-1}\sim\left(-\frac{\tilde{u}_{52}}{d(\zeta)}\right)^{1/2},
$$
which in combination with (\ref{eq:41-b}) leads to
$$
\frac{d\tilde{u}_{52}^{1/2}}{dz}\sim(-d(\zeta))^{-1/2},
$$
hence
$$
\tilde{u}_{52}^{1/2}\sim \tilde{u}_{52}(z_0)^{1/2}+(-d(\zeta))^{-1/2}(z-z_0),
$$
and therefore
$$
\tilde{u}_{52}(z)\sim-\frac{(z-z_0)^2}{d(\zeta)}
\quad\text{if}\quad
|z-z_0|\gg|\tilde{u}_{52}(z_0)|^{1/2}.
$$
For large finite $R_2>0$, the equation $|\tilde{u}_{52}|=R_2$ corresponds to
$|z-z_0|\sim\sqrt{|d(\zeta)R_2}$,
which is still small compared to $|z_0|\sim|\zeta|$,
and therefore
$|z-\zeta|\le|z-z_0|+|z_0-\zeta|\ll|\zeta|.$
This proves the statement of the lemma. 
\end{proof}

\begin{lemma}[Behaviour near $\mathcal{L}_6^*\setminus\mathcal{L}_3^*$]\label{lemma:L6-L3}
If a solution at the complex time $z$ is sufficiently close to $\mathcal{L}_6^*\setminus\mathcal{L}_3^*$, then there exists a unique $\zeta\in\mathbb{C}$ such that:
\begin{enumerate}
\item
$v_{91}(\zeta)=0$, i.e.~ $v_{91}(\zeta)\in\mathcal{L}_9(\zeta)$;
\item
$|z-\zeta|=O(|d(z)||v_{91}(z)|)$ for small $d(z)$ and limited $|v_{91}(z)|$.
\end{enumerate}
In other words, the solution has a pole at $z=\zeta$.

For large $R_6>0$, consider the set $\{z\in\mathbb{C}\mid |v_{91}|\le R_6\}$.
Then, its connected component containing $\zeta$ is an approximate disk $D_{6}$ with centre $\zeta$ and radius 
$|d(\zeta)|R_6$,
and $z\mapsto v_{91}(z)$ is a complex analytic diffeomorphism from that approximate disk onto $\{v\in\mathbb{C}\mid |v|\le R_6\}$.
\end{lemma}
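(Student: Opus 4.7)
The plan is to mirror the proof of Lemma~\ref{lemma:L4-L1} step by step, since the geometry near $\mathcal{L}_6^*\setminus\mathcal{L}_3^*$ is structurally identical: $\mathcal{L}_6^*$ carries the base point $b_8$ whose blow-up produces $\mathcal{L}_9(z)$, just as $b_6$ on $\mathcal{L}_4^*$ produced $\mathcal{L}_7(z)$. First I would work in the chart $(u_{91},v_{91})$ from Section~\ref{sec:b678}, in which $\mathcal{L}_6^*\setminus\mathcal{L}_3^*$ is given by the equation $u_{91}=0$ and parametrized by $v_{91}\in\mathbb{C}$, while $\mathcal{L}_9(z)$ minus one point is given by $v_{91}=0$ and parametrized by $u_{91}\in\mathbb{C}$; the missing point of $\mathcal{L}_9(z)$ is covered by the companion chart $(u_{92},v_{92})$ with an identical argument.

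Next I would derive asymptotic expansions as $u_{91}\to 0$, with $v_{91}$ and $1/z$ bounded, parallel to~(\ref{eq:71-a})--(\ref{eq:71-d}). From the explicit blow-up formulas one should obtain, up to constants $c_1,c_2$ read off from Section~\ref{sec:b678}, relations of the form $v_{91}'\sim c_1/u_{91}$, $J_{91}=\pm u_{91}$, and a logarithmic derivative $J_{91}'/J_{91}=c_2+3/(2z)+O(J_{91})$; the relation $EJ_{91}\sim -1+\alpha_1/(z v_{91})$ is already established by the preceding lemma on the continuous function $d$. Integrating the logarithmic derivative from $\zeta$ to $z$ gives
\[
J_{91}(z)=J_{91}(\zeta)\, e^{c_2(z-\zeta)}\!\left(\frac{z}{\zeta}\right)^{3/2}\!(1+o(1))\quad\text{for}\ \frac{z}{\zeta}\sim 1,
\]
so $u_{91}$ is approximately constant on the region of interest, and integrating $v_{91}'\sim c_1/u_{91}$ yields $v_{91}(z)\sim v_{91}(\zeta)+c_1(z-\zeta)/u_{91}$. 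Hence, provided $|u_{91}(\zeta)|\ll 1/|\zeta|$, the map $z\mapsto v_{91}(z)$ is a complex analytic diffeomorphism from an approximate disk around $\zeta$ of radius proportional to $|u_{91}(\zeta)|R$ onto an approximate disk of radius $R$ around $v_{91}(\zeta)$; for $R$ sufficiently large the image contains $0$, producing the unique pole point which we then rename $\zeta$.

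Having recentred $\zeta$ at the pole, I would combine $d(z)/d(\zeta)\sim 1$ with the identity $d=-J_{91}$ near $\mathcal{L}_6^*\setminus\mathcal{L}_3^*$ to obtain $u_{91}(z)/d(\zeta)\sim \mp 1$ and $v_{91}(z)\sim \mp c_1(z-\zeta)/d(\zeta)$, which gives assertion~(2) at once and pins down the radius of $D_6$ as $|d(\zeta)|R_6$ via the equation $|v_{91}|=R_6$. Since $|d(\zeta)|R_6\ll|\zeta|$ for $|d(\zeta)|$ small, $D_6$ is genuinely an approximate disk, on which $v_{91}$ is a biholomorphism onto $\{v\in\mathbb{C}\mid|v|\le R_6\}$, as claimed. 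For completeness I would also verify $E(z)J_{91}(z)\sim -1$ on an annular region $D_6\setminus D_6'$ from the asymptotic $EJ_{91}\sim -1+\alpha_1/(zv_{91})$, by bounding the error by $|d(\zeta)|/|\zeta(z-\zeta)|$, exactly as in Lemma~\ref{lemma:L4-L1}.

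The main obstacle I expect is not conceptual but a matter of bookkeeping: extracting the correct constants $c_1$, $c_2$ and the sign in $J_{91}=\pm u_{91}$ from the explicit blow-up data of Section~\ref{sec:b678}, and verifying that all error terms remain uniformly controlled by $|d(z)||v_{91}(z)|$ throughout $D_6$. Once those explicit checks are in place, the conclusion follows by the same three-step scheme (integration of the logarithmic derivative, approximate linearization of $v_{91}$, recentring at the pole) already used in Lemmas~\ref{lemma:L4-L1} and~\ref{lemma:L5-L2}.
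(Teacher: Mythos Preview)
Your proposal is correct and follows essentially the same approach as the paper. The paper's own proof simply records the explicit asymptotic relations in the $(u_{91},v_{91})$ chart, namely $v_{91}'\sim -1/u_{91}$, $J_{91}=u_{91}$, $J_{91}'/J_{91}=-2+3/(2z)+O(u_{91})$ and $EJ_{91}\sim -1+\alpha_1/(zv_{91})$, observes that they are analogous to (\ref{eq:71-a})--(\ref{eq:71-d}), and then refers the reader to the argument of Lemma~\ref{lemma:L4-L1}; your outline fills in exactly that argument with the constants left symbolic (your $c_1=-1$, $c_2=-2$, and $J_{91}=+u_{91}$).
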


\begin{proof}
Line $\mathcal{L}_6^*\setminus\mathcal{L}_3^*$ is given by the equation $u_{91}=0$ and parametrized by $v_{91}\in\mathbb{C}$, see Section \ref{sec:b678}.
Moreover, $\mathcal{L}_9$ (without one point), is given by $v_{91}=0$ and parametrized by $u_{91}\in\mathbb{C}$.
For the study of the solutions near $\mathcal{L}_6^*\setminus\mathcal{L}_3^*$, we use the coordinates $(u_{91},v_{91})$.
Asymptotically, for $u_{91}\to0$, and bounded $v_{91}$, ${1}/{z}$, we have:
$$
\begin{aligned}
 v_{91}'&\sim-\frac{1}{u_{91}},\label{eq:91-a}
 \\
J_{91}&=u_{91},\label{eq:91-b}
\\
\frac{J_{91}'}{J_{91}}&=-2+\frac{3}{2z}+O(u_{91})=-2+\frac{3}{2z}+O(J_{91}),\label{eq:91-c}
\\
EJ_{91}&\sim-1+\frac{\alpha_1}{zv_{91}}.\label{eq:91-d}
\end{aligned}
$$
Notice that these equations are analogous to (\ref{eq:71-a})-(\ref{eq:71-d}), thus the remainder of the proof is similar to that provided for Lemma \ref{lemma:L4-L1}.  
\end{proof}

\begin{lemma}[Behaviour near $\mathcal{L}_3^*\setminus\mathcal{L}_0^*$]\label{lemma:L3-L0}
For large finite $R_3>0$, consider the set of all $z\in\mathbb{C}$, such that the solution at complex time $z$ is close to $\mathcal{L}_3^*\setminus\mathcal{L}_0^*$,
with $|v_{61}(z)|\le R_1$, but not close to $\mathcal{L}_6^*$.
Then the connected component of that set containing $\zeta$ is the complement of $D_6$ in an approximate disk $D_3$ with centre at $\zeta$ and radius $\sim\sqrt{|d(\zeta)|R_3}$.
More precisely, $z\mapsto v_{61}$ defines a $2$-fold covering from the annular domain $D_3\setminus D_6$ onto the complement in
$\{u\in\mathbb{C}\mid|u|\le R_3\}$ of an approximate disk with centre at the origin and small radius $\sim|d(\zeta)|R_6^2$,
where $v_{61}(z)\sim-d(\zeta)(z-\zeta)^2$.
\end{lemma}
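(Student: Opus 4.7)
The plan is to mimic exactly the argument used for Lemmas \ref{lemma:L1-L0} and the $\mathcal{L}_2^*\setminus\mathcal{L}_0^*$ lemma, replacing the chart $(u_{41},v_{41})$ (resp.\ $(u_{52},v_{52})$) with the chart $(u_{61},v_{61})$ that makes $\mathcal{L}_3^*\setminus\mathcal{L}_0^*$ visible. In that chart, $\mathcal{L}_3^*\setminus\mathcal{L}_0^*$ is cut out by $u_{61}=0$ (and parametrised by $v_{61}\in\mathbb{C}$), while $\mathcal{L}_6^*$ minus one point is cut out by $v_{61}=0$ (and parametrised by $u_{61}\in\mathbb{C}$); see Section \ref{sec:b345}. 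The first step is therefore to read off from the explicit formulae in Appendix \ref{sec:res-piv} the leading asymptotics analogous to (\ref{eq:41-a})--(\ref{eq:41-e}) as $u_{61}\to 0$ with $v_{61}$ and $1/z$ bounded. By the symmetry between $\mathcal{L}_1^*$ and $\mathcal{L}_3^*$ already visible in Lemma \ref{lemma:E'/E} (the roles of $\alpha_1$ and $\alpha_2$ interchange) and in the comparison of Lemmas \ref{lemma:L4-L1} and \ref{lemma:L6-L3}, the relations to expect are $u_{61}'\sim -1/v_{61}$, $v_{61}'\sim 2/u_{61}$, $J_{61}=-u_{61}^2 v_{61}$, $EJ_{61}\sim -1$, and $E'/E\sim -3/(2z)-\alpha_1/(v_{61}z)$.

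Next I would substitute the asymptotic $v_{61}' \sim 2/u_{61}$ relation into $E'/E$ to rewrite it as $-3/(2z)+(\alpha_1/z)u_{61}'$ up to lower order terms, and integrate from a base point $z_0$ to $z_1$. This yields
\[
\log\frac{E(z_1)}{E(z_0)}\sim\log\left(\frac{z_1}{z_0}\right)^{-3/2}+\alpha_1\left(\frac{u_{61}(z_1)}{z_1}-\frac{u_{61}(z_0)}{z_0}+\int_{z_0}^{z_1}\frac{u_{61}(z)}{z^2}\,dz\right),
\]
so that $E(z_1)/E(z_0)\sim 1$ whenever $|z-z_0|\ll|z_0|$ and $|u_{61}(z)|\ll |z_0|$ along the segment.

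Then I would pick $z_0$ on the boundary of the disk $D_6$ produced by Lemma \ref{lemma:L6-L3}, so that $d(\zeta)/d(z_0)\sim E(z_0)J_{91}(z_0)/(-1)\sim 1$ and $|v_{91}(z_0)|=R_6$; this gives $|u_{61}(z_0)|\sim 1/R_6\ll 1$. Combined with $J_{61}=-u_{61}^2 v_{61}$ and $EJ_{61}\sim -1$, one gets $|v_{61}(z_0)|\sim |d(\zeta)|R_6^2$, which is small. Writing $z=\zeta+r(z_0-\zeta)$ with $r\ge 1$ and using $E\sim d(\zeta)^{-1}$, the system reduces to $u_{61}^{-1}\sim (-v_{61}/d(\zeta))^{1/2}$ together with $v_{61}'\sim 2/u_{61}$, so
\[
\frac{d}{dz}\bigl(v_{61}^{1/2}\bigr)\sim (-d(\zeta))^{-1/2},
\]
integrating to $v_{61}(z)\sim -(z-z_0)^2/d(\zeta)$ once $|z-z_0|\gg |d(\zeta) v_{61}(z_0)|^{1/2}$. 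Since $z_0$ lies on the boundary of $D_6$, this is the same as $v_{61}(z)\sim -d(\zeta)(z-\zeta)^2$ up to negligible error.

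Finally, the condition $|v_{61}|=R_3$ corresponds to $|z-\zeta|\sim\sqrt{|d(\zeta)|R_3}$, which is still small compared to $|\zeta|$ provided $|d(\zeta)|$ is small. The two-fold nature of $z\mapsto v_{61}$ on the annular region $D_3\setminus D_6$ is immediate from $v_{61}(z)\sim -d(\zeta)(z-\zeta)^2$ once one verifies the image is as claimed. The main obstacle here is essentially bookkeeping: one must keep careful track of the error terms when integrating $E'/E$, and also verify that the inner radius of the annulus truly matches the outer radius of $D_6$, i.e.\ that the value $|v_{61}|\sim |d(\zeta)|R_6^2$ on $\partial D_6$ is negligible compared to $R_3$, so that the $2$-fold covering picture identified in Lemma \ref{lemma:L1-L0} transfers verbatim. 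All the other estimates are dictated by the analogy with the $\mathcal{L}_1^*\setminus\mathcal{L}_0^*$ case via the $\alpha_1\leftrightarrow\alpha_2$ and sign changes visible in Lemmas \ref{lemma:E'/E} and \ref{lemma:L6-L3}.
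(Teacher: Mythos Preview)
Your approach is exactly the paper's: the paper simply records the leading asymptotics in the chart $(u_{61},v_{61})$ and then says ``these equations are analogous to (\ref{eq:41-a})--(\ref{eq:41-e}); therefore, the remainder of the proof is similar to that provided for Lemma \ref{lemma:L1-L0}.'' Your detailed walk-through of the integration of $E'/E$, the choice of $z_0\in\partial D_6$, and the reduction to $\frac{d}{dz}(v_{61}^{1/2})\sim(-d(\zeta))^{-1/2}$ is precisely the transplanted Lemma \ref{lemma:L1-L0} argument.

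One bookkeeping correction: the signs you predicted from symmetry are reversed. From the explicit formulae in Appendix \ref{sec:res-piv} one has, as $u_{61}\to0$,
\[
u_{61}'\sim\frac{1}{v_{61}},\qquad v_{61}'\sim-\frac{2}{u_{61}},\qquad J_{61}=+\,u_{61}^2v_{61},\qquad EJ_{61}\sim-1,
\]
not $u_{61}'\sim-1/v_{61}$, $v_{61}'\sim+2/u_{61}$, $J_{61}=-u_{61}^2v_{61}$. (Also, it is the relation for $u_{61}'$, not $v_{61}'$, that you substitute to rewrite $E'/E$.) These sign flips do not damage the argument: after correcting them one gets $u_{61}^{-1}\sim(-v_{61}/d(\zeta))^{1/2}$ just as you wrote, and the subsequent square-root integration and the $2$-fold covering conclusion are unchanged because the sign is absorbed in the choice of branch. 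So the proof stands once the three signs are fixed.
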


\begin{proof}
The line $\mathcal{L}_3^*\setminus\mathcal{L}_0^*$ is visible in the coordinate system $(u_{61},v_{61})$, where it is given by the equation $u_{61}=0$ and parametrized by $v_{61}\in\mathbb{C}$; see Section \ref{sec:b345}.
In that chart, the line $\mathcal{L}_6^*$ (without one point) is given by the equation $v_{61}=0$ and parametrized by $u_{61}\in\mathbb{C}$.

For $u_{61}\to0$ and bounded $v_{61}$ and ${1}/{z}$, we have:
$$
\begin{aligned}
 u_{61}'&\sim\frac{1}{v_{61}},
 \\
v_{61}'&\sim-\frac{2}{u_{61}},
 \\
J_{61}&=u_{61}^2v_{61},
\\
EJ_{61}&\sim-1,
\\
\frac{E'}{E}&\sim-\frac{3}{2z}-\frac{\alpha_1}{v_{61}z}.
\end{aligned}
$$
Notice that these equations are analogous to (\ref{eq:41-a})-(\ref{eq:41-e}). Therefore, the remainder of the proof is similar to that provided for Lemma \ref{lemma:L1-L0}.  
\end{proof}

\begin{theorem}\label{th:asymptotics}
Let $\varepsilon_1$, $\varepsilon_2$, $\varepsilon_3$ be given such that 
$\varepsilon_1>0$, 
$0<\varepsilon_2<\frac32$, 
$0<\varepsilon_3<1$.
Then there exists $\delta>0$ such that if $|z_0|>\varepsilon_1$ and $|d(z_0)|<\delta$, then:
$$
\rho=\sup\{r>|z_0|\ \text{such that}\ |d(z)|<\delta\ \text{whenever}\ |z_0|\le|z|\le r\}
$$
satisfies:
\begin{enumerate}
\item
$\delta\ge|d(z_0)|\left(\dfrac{\rho}{|z_0|}\right)^{3/2-\varepsilon_2}(1-\varepsilon_3)$;
\item
if $|z_0|\le|z|\le\rho$ then
$d(z)=d(z_0)\left(\dfrac{z}{z_0}\right)^{3/2+\varepsilon_2(z)}(1+\varepsilon_3(z))$;
\item
if $|z|\ge\rho$ then $d(z)\ge\delta(1-\varepsilon_3)$.
\end{enumerate}
\end{theorem}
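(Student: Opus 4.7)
The plan is to view $d(z)$ as a measure of distance to the infinity set $\mathcal{I}$ and to control its growth by integrating the approximate ODE $d'/d \sim 3/(2z)$ obtained from Lemma \ref{lemma:E'/E}. Where $d = 1/E$, we have $d'/d = -E'/E$, so the first assertion of Lemma \ref{lemma:E'/E} supplies a neighbourhood $U$ of $\mathcal{L}_0^*$ in which $|d'/d - 3/(2z)|$ is as small as desired; quantitatively, the explicit expressions $r_{02}, r_{03}$ in the proof of that lemma vanish on $\mathcal{L}_0^*$, so by shrinking $\delta$ we can guarantee $|\zeta(d'(\zeta)/d(\zeta) - 3/(2\zeta))| < \varepsilon$ on the set $\{|d|<\delta\}$. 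This is the analytic input.

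Given $z_0$ with $|z_0|>\varepsilon_1$ and $|d(z_0)|<\delta$, I would integrate $d'/d$ along the radial segment from $z_0$ to $z$ for any $z$ with $|z_0|\le|z|\le\rho$, obtaining
\[
\log\frac{d(z)}{d(z_0)} = \tfrac{3}{2}\log\frac{z}{z_0} + R(z),
\]
where the error $R(z)$ is controlled by the bound above. The portion of $R(z)$ proportional to $\log(z/z_0)$ is absorbed into the modified exponent $\varepsilon_2(z)$, while a bounded residue contributes the multiplicative factor $1+\varepsilon_3(z)$; this is exactly the decomposition of statement~(2). Part~(1) follows by letting $|z|\to\rho^-$ in~(2) and using $|d|=\delta$ at $|z|=\rho$ (by definition of $\rho$ and continuity of $d$), and part~(3) is a restatement of this same continuity property combined with~(2).

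The technical obstacle is justifying that the radial path from $z_0$ genuinely stays in a region where the $\mathcal{L}_0^*$-estimate applies, rather than straying into neighbourhoods of $\mathcal{L}_1^*\cup\mathcal{L}_2^*\cup\mathcal{L}_3^*$ where Lemma \ref{lemma:E'/E} only provides the weaker bound $|zE'/E|\le C$. This is precisely what Lemmas \ref{lemma:L4-L1}--\ref{lemma:L3-L0} are designed to handle: any excursion close to one of these lines is localised inside an approximate disk $D_i$ surrounding a pole of the solution, of radius $\sim\sqrt{|d(\zeta)|R_i}$ or $\sim|d(\zeta)|R_i$, on whose boundary one recovers $d(z)/d(\zeta)\sim 1$. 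Thus, whenever the radial path is indented around such a disk, the detour contributes no net growth beyond what is already absorbed in the $\varepsilon_3$ error term, and the total integral of $3/(2\zeta)$ along the (mildly indented) path still produces the advertised power-law growth. The hypothesis $|d|<\delta$ simultaneously prevents the path from ever approaching $\mathcal{L}_4^*\cup\mathcal{L}_5^*\cup\mathcal{L}_6^*$, since there $d=J_{71},\,J_{82}/2,\,-J_{91}$ does not vanish on the exceptional lines but on $\mathcal{L}_7,\mathcal{L}_8,\mathcal{L}_9$, so the pole disks $D_4,D_5,D_6$ are automatically avoided. Combining this bookkeeping with the ODE integration yields all three statements.
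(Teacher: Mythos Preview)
Your overall strategy matches the paper's: integrate $d'/d$ along a path that stays close to $\mathcal{L}_0^*$ except on small disks around poles, use Lemma~\ref{lemma:E'/E} on the path, and use Lemmas~\ref{lemma:L4-L1}--\ref{lemma:L3-L0} to show the excursions away from $\mathcal{L}_0^*$ are confined to approximate disks of radius $\sim|d|^{1/2}$ on which $d$ is essentially constant. The paper phrases this as the existence of a path $C^1$-close to $t\mapsto z_1^t z_0^{1-t}$ along which the solution remains near $\mathcal{L}_0^*$; your ``indented radial path'' is the same device.

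However, your final sentence contains a genuine error. You claim that $d=J_{71},\,J_{82}/2,\,-J_{91}$ ``does not vanish on the exceptional lines but on $\mathcal{L}_7,\mathcal{L}_8,\mathcal{L}_9$'', and hence that $|d|<\delta$ prevents the solution from approaching $\mathcal{L}_4^*\cup\mathcal{L}_5^*\cup\mathcal{L}_6^*$. This is backwards: from Appendix~\ref{sec:b678}, $J_{71}=-u_{71}$ and $\mathcal{L}_4^*$ is given by $u_{71}=0$, so $d$ \emph{vanishes} on $\mathcal{L}_4^*$ (and similarly on $\mathcal{L}_5^*$, $\mathcal{L}_6^*$). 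Thus $|d|<\delta$ is entirely compatible with being near these lines; they are not avoided. Fortunately your previous paragraph already handles this correctly: Lemmas~\ref{lemma:L4-L1}, \ref{lemma:L5-L2}, \ref{lemma:L6-L3} show that a visit to $\mathcal{L}_4^*,\mathcal{L}_5^*,\mathcal{L}_6^*$ is confined to the smaller disks $D_4,D_5,D_6\subset D_1,D_2,D_3$ around poles, on whose boundaries $d(z)/d(\zeta)\sim1$, so the same indentation argument applies. Simply delete the erroneous final sentence and the proof stands.
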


\begin{proof}
Suppose a solution of the system (\ref{eq:sistemz}) is close to $\mathcal{L}_0^*$ at times $z_0$ and $z_1$.
It follows from Lemmas \ref{lemma:L4-L1}--\ref{lemma:L3-L0} that for every solution close to $\mathcal{I}$, the set of complex times $z$ such that the solution is not close to $\mathcal{L}_0^*$ is the union of approximate disks of radius $\sim|d|^{1/2}$.
Hence if the solution is near $\mathcal{I}$ for all complex times $z$ such that $|z_0|\le|z|\le|z_1|$, then there exists a path $\gamma$ from $z_0$ to $z_1$, such that the solution is close to $\mathcal{L}_0$ for all $z\in\gamma$ and $\gamma$ is $C^1$-close to the path:
$t\mapsto z_1^tz_0^{1-t}$, $t\in[0,1]$,

Then Lemma \ref{lemma:E'/E} implies that:
$$
\log\frac{E(z)}{E(z_0)}=-\frac{3}{2}\log\frac{z}{z_0}\int_0^1dt+o(1),
$$
therefore
$$
E(z)=E(z_0)\left(\frac{z}{z_0}\right)^{3/2+o(1)}(1+o(1)),
$$
and
\begin{equation}\label{eq:d(z)=d(z0)}
d(z)=d(z_0)\left(\frac{z}{z_0}\right)^{3/2+o(1)}(1+o(1)).
\end{equation}
From Lemmas \ref{lemma:L4-L1}--\ref{lemma:L3-L0} we then have that, as long as the solution is close to $\mathcal{I}$, as it moves into a neighbourhood of $\mathcal{L}_4^*\setminus\mathcal{L}_1^*$, $\mathcal{L}_5^*\setminus\mathcal{L}_2^*$, $\mathcal{L}_6^*\setminus\mathcal{L}_3^*$, the ratio of $d$ remains close to $1$.

For the first statement of the theorem, we have:
$$
\delta>d(z)\ge d(z_0)\left(\frac{z}{z_0}\right)^{3/2-\varepsilon_2}(1-\varepsilon_3)
$$
and so
$$
\delta\ge\sup_{\{z\mid|d(z)|<\delta\}} d(z_0)\left(\frac{z}{z_0}\right)^{3/2-\varepsilon_2}(1-\varepsilon_3).
$$
The second statement follows from (\ref{eq:d(z)=d(z0)}), while the third follows by the assumption on $z$. 
\end{proof}

\section{The limit set}\label{sec:limitset}
In this section we define and consider properties of the limit sets of solutions. In Theorem \ref{th:K}, we prove that there is a compact set $K\subset\mathcal{F}(\infty)$, such the limit sets of all solutions of  (\ref{eq:sistem}) are contained in $K$ and that the limit set of any solution is non-empty, compact, connected, and invariant under the flow of the autonomous system (\ref{eq:sistemzinf}).
These results lead us to Theorem \ref{th:poleszeroes}, i.e., that each non-rational solution of the fourth Painlev\'e equation has infinitely many zeroes and poles.
  
\begin{theorem}\label{th:K}
There exists a compact subset $K$ of $\mathcal{F}(\infty)\setminus\mathcal{I}(\infty)$, such that the limit set $\Omega_{(u,v)}$ of any solution $(u,v)$ is contained in $K$.
Moreover, $\Omega_{(u,v)}$ is a non-empty, compact and connected set, which is invariant under the flow of the autonomous system (\ref{eq:sistemzinf}).
\end{theorem}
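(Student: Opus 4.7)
The plan is to combine the quantitative repulsion from the infinity set $\mathcal{I}$ established in Theorem \ref{th:asymptotics} with standard limit set arguments. First I would choose a small $\delta_0 > 0$ and set
\[
K := \{p \in \mathcal{F}(\infty) \mid |d(p)| \geq \delta_0\},
\]
where $d$ is the continuous function defined in a neighbourhood of $\mathcal{I}$ in Section \ref{sec:infinity}, vanishing exactly on $\mathcal{I}$. By Theorem \ref{th:asymptotics}(1), once $|d(z_0)| < \delta_0$ at some $z_0$, the quantity $|d(z)|$ grows approximately like $|z/z_0|^{3/2}$ and must exceed $\delta_0$ at only slightly larger $|z|$. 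Hence $|d|$ cannot remain below $\delta_0$ along any unbounded sequence of times, forcing the limit set of every solution into the compact set $K \subset \mathcal{F}(\infty) \setminus \mathcal{I}(\infty)$. For this one must identify neighbourhoods of $K$ in nearby fibres $\mathcal{F}(z)$ with neighbourhoods in $\mathcal{F}(\infty)$, which is possible because the $z$-dependent lines $\mathcal{L}_7(z)$, $\mathcal{L}_8(z)$, $\mathcal{L}_9(z)$ approach their autonomous limits and therefore lie outside $K$ for large $|z|$.

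With $\Omega_{(u,v)} \subset K$ in hand, non-emptiness is immediate: any sequence $(u(z_n), v(z_n))$ with $|z_n|\to\infty$ lies eventually in the compact $K$ and so has a convergent subsequence. Compactness of $\Omega_{(u,v)}$ follows from its definition
\[
\Omega_{(u,v)} = \bigcap_{R > 0} \overline{\{(u(z),v(z)) \mid |z| > R\}}
\]
as a nested intersection of closed subsets of $K$. For connectedness I would argue by contradiction: a partition $\Omega_{(u,v)} = A \sqcup B$ into disjoint non-empty closed sets yields disjoint open neighbourhoods $U_A \supset A$, $U_B \supset B$; since the solution, viewed continuously in the Okamoto space, accumulates on both $A$ and $B$, on any unbounded radial path in the $z$-plane it must transit between $U_A$ and $U_B$ infinitely often, each time traversing the compact set $K \setminus (U_A \cup U_B)$. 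Extracting a convergent subsequence of transit points would produce an element of $\Omega_{(u,v)}$ outside $A \cup B$, a contradiction.

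Invariance under the complex-time flow $\phi_t$ of (\ref{eq:sistemzinf}) follows from the fact that the right-hand sides of (\ref{eq:sistemz}) and (\ref{eq:sistemzinf}) differ by $O(1/z)$ terms which vanish uniformly on $K$ as $|z|\to\infty$. If $p = \lim_n (u(z_n),v(z_n)) \in \Omega_{(u,v)}$, then for any fixed $t \in \mathbb{C}$ the curves $s \mapsto (u(z_n+s),v(z_n+s))$, $s \in [0,t]$, solve equations converging uniformly to (\ref{eq:sistemzinf}) on a neighbourhood of $\phi_{[0,t]}(p) \subset K$; continuous dependence on parameters then yields $(u(z_n+t),v(z_n+t)) \to \phi_t(p)$, and since $|z_n+t|\to\infty$, this places $\phi_t(p)$ in $\Omega_{(u,v)}$.

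The main obstacle is the first step: proving that $\Omega_{(u,v)}$ actually lies in $\mathcal{F}(\infty) \setminus \mathcal{I}(\infty)$ rather than touching the infinity set. Theorem \ref{th:asymptotics} provides the correct quantitative control, but $d$ is only defined near $\mathcal{I}$, so one must carefully combine the growth estimate with an analysis of how the solution transits between neighbourhoods of the various exceptional lines before concluding that $|d|$ is bounded below by a uniform constant on all sufficiently late parts of the orbit. Once this containment in $K$ is secured, the remaining properties follow by routine dynamical systems arguments.
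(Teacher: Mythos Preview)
Your proposal is correct and follows essentially the same approach as the paper: define $K=\{|d|\ge\delta\}\subset\mathcal{F}(\infty)$, use Theorem~\ref{th:asymptotics} to force the orbit into $K_{\delta,r}$ for large $|z|$, and then deduce non-emptiness, compactness, and invariance via nested intersections and $C^1$ convergence of the vector field. The one genuine difference is your connectedness argument: you argue by contradiction via transit points between disjoint open neighbourhoods, whereas the paper simply observes that $Z_r=\{|z|\ge r\}$ is connected, so the closure $\Omega_{(u,v),r}$ of its continuous image is connected, and then invokes the fact that a decreasing intersection of non-empty compact connected sets is connected. The paper's route is shorter and avoids the need to track paths in the $z$-plane; your argument is the standard one for real-time $\omega$-limit sets and works here too once you connect $z_n$ and $w_n$ by arcs in $\{|z|>R\}$, but it is slightly more laborious.
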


\begin{proof}
For any positive numbers $\delta$, $r$, let $K_{\delta,r}$ denote the set of all $s\in\mathcal{F}(z)$ such that $|z|\ge r$ and $|d(s)|\ge\delta$.
Since $\mathcal{F}(z)$ is a complex analytic family over $\mathbb{P}^1\setminus\{0\}$ of compact surfaces 
$\mathcal{F}(z)$, $K_{\delta,r}$ is also compact.
Furthermore $K_{\delta,r}$ is disjoint from the union of the infinity sets $\mathcal{I}(z)$, $z\in\mathbb{P}^1\setminus\{0\}$,
and therefore $K_{\delta,r}$ is a compact subset of Okamoto's space $\mathcal{O}\setminus\mathcal{F}(\infty)$.
When $r$ grows to the infinity, the sets $K_{\delta,r}$ shrink to the set 
$$
K_{\delta,\infty}=\{s\in\mathcal{F}(\infty) \mid |d(s)|\ge\delta \}\subset\mathcal{F}(\infty)\setminus\mathcal{I}(\infty),
$$
which is compact.

It follows from Theorem \ref{th:asymptotics} that there exists $\delta>0$ such that for every solution $(u,v)$ there exists $r_0>0$ with the following property: 
$$
(u(z),v(z))\in K_{\delta,r_0}\ \text{for every}\ z \ \text{such that}\ |z|\ge r_0.
$$
In the sequel, we take $r\ge r_0$, when it follows that $(u(z),v(z))\in K_{\delta,r}$ whenever $|z|\ge r$.
Let $Z_r=\{z\in\mathbb{C}\mid |z|\ge r\}$ and let $\Omega_{(u,v),r}$ denote the closure of $(u,v)(Z_r)$ in $\mathcal{O}$.
Since $Z_r$ is connected and $(u,v)$ is continuous, $\Omega_{(u,v),r}$ is also connected.
Since $(u,v)(Z_r)$ is contained in the compact subset $K_{\delta,r}$, its closure $\Omega_{(u,v),r}$ is also contained in $K_{\delta,r}$ and therefore $\Omega_{(u,v),r}$ is a non-empty compact and connected subset of $\mathcal{O}\setminus\mathcal{F}(\infty)$.
The intersection of a decreasing sequence of non-empty compact and connected sets is non-empty, compact, and connected:
therefore, as $\Omega_{(u,v),r}$ decrease to $\Omega_{(u,v)}$ when $r$ grows to the infinity, it follows that $\Omega_{(u,v)}$ is a non-empty, compact, and connected set of $\mathcal{O}$.
Since $\Omega_{(u,v),r}\subset K_{\delta,r}$ for all $r\ge r_0$, and the sets $K_{\delta,r}$ shrink to the compact subset $K_{\delta,\infty}$ of $\mathcal{F}(\infty)\setminus\mathcal{I}(\infty)$ as $r$ grows to the infinity, it follows that $\Omega_{(u,v)}\subset K_{\delta,\infty}$.
This proves the first statement of the theorem with $K=K_{\delta,\infty}$.

Since $\Omega_{(u,v)}$ is the intersection of the decreasing family of compact sets $\Omega_{(u,v),r}$, there exists for every neighbourhood $A$ of $\Omega_{(u,v)}$ in $\mathcal{O}$ and $r>0$ such that $\Omega_{(u,v),r}\subset A$, hence $(u(z),v(z))\in A$ for every $z\in\mathbb{C}$ such that $|z|\ge r$.
If $z_j$ is any sequence in $\mathbb{C}\setminus\{0\}$ such that $|z_j|\to\infty$, then the compactness of $K_{\delta,r}$, in combination with $(u,v)Z_r\subset K_{\delta,r}$, implies that there is a subsequence $j=j(k)\to\infty$ as $k\to\infty$ and an $s\in K_{\delta,r}$, such that:
$$
(u(z_{j(k)}),v(z_{j(k)}))\to s\ \text{as}\ k\to\infty.
$$
Then it follows that $s\in\Omega_{(u,v)}$.

Next, we prove that $\Omega_{(u,v)}$ is invariant under the flow $\Phi^t$ of the autonomous Hamiltonian system.
Let $s\in\Omega_{(u,v)}$ and $z_j$ be a sequence in $\mathbb{C}\setminus\{0\}$ such that $z_j\to\infty$ and $(u(z_j),v(z_j))\to s$.
Since the $z$-dependent vector field of the Bоutroux-Painlev\'e system converges in $C^1$ to the vector field of the autonomous Hamiltonian system as $z\to\infty$, it followс from the continuous dependence on initial data and parameters, that the distance between $(u(z_j+t),v(z_j+t)$ and $\Phi^t(u(z_j),v(z_j))$ converges to zero as $j\to\infty$.
Since $\Phi^t(u(z_j),v(z_j))\to\Phi^t(s)$ and $z_j\to\infty$ as $j\to\infty$, it follows that $(u(z_j+t),v(z_j+t))\to\Phi^t(s)$ and $z_j+t\to\infty$ as $j\to\infty$, hence $\Phi^t(s)\in\Omega_{(u,v)}$.  
\end{proof}

\begin{proposition}\label{prop:intersections}
Every non-special solution $(u(z),v(z))$ intersects each of the pole lines $\mathcal{L}_7$, $\mathcal{L}_8$, $\mathcal{L}_9$ infinitely many times. 
\end{proposition}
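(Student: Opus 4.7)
The plan is to combine Theorem \ref{th:K} with the geometry of the elliptic pencil $\{E=c\}$ to place points of the limit set $\Omega_{(u,v)}$ on each pole line, and then invoke Lemmas \ref{lemma:L4-L1}, \ref{lemma:L5-L2}, and \ref{lemma:L6-L3} to turn each such accumulation into an honest intersection of the solution with $\mathcal{L}_7$, $\mathcal{L}_8$, or $\mathcal{L}_9$ at a nearby complex time.

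First I would verify, using the charts of Appendix \ref{sec:res-piv}, that every smooth cubic $\{E=c\}\subset\mathcal{F}(\infty)$ with $c\notin\{0,-8/27\}$ meets each of $\mathcal{L}_7$, $\mathcal{L}_8$, $\mathcal{L}_9$ transversally in exactly one point. This is geometric: the affine cubic $uv(u+v+2)=-c$ has three distinct asymptotic branches going to the base points $b_0,b_1,b_2\in\mathcal{L}_0$, and the successive blow-ups of Section \ref{sec:okamoto} separate these branches so that each closes up on precisely one pole line. By Theorem \ref{th:K}, $\Omega_{(u,v)}$ is a non-empty, compact, connected subset of $\mathcal{F}(\infty)\setminus\mathcal{I}(\infty)$, invariant under the complex-time autonomous flow $\Phi^t$. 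Its image $E(\Omega_{(u,v)})\subset\mathbb{C}$ is compact and connected. If $E(\Omega_{(u,v)})$ contains some $c_0\notin\{0,-8/27\}$, pick $s\in\Omega_{(u,v)}$ with $E(s)=c_0$: its $\Phi^t$-orbit lies in $\Omega_{(u,v)}$, and a non-vanishing holomorphic vector field on the smooth elliptic curve $\{E=c_0\}\cong\mathbb{C}/\Lambda$ is constant in the uniformising coordinate, so the complex-time orbit through $s$ sweeps out the entire curve. Hence $\{E=c_0\}\subset\Omega_{(u,v)}$ and meets each of $\mathcal{L}_7$, $\mathcal{L}_8$, $\mathcal{L}_9$.

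The remaining possibility is $E(\Omega_{(u,v)})\subset\{0,-8/27\}$; connectedness reduces this to $E\equiv c_0$ on $\Omega_{(u,v)}$ for a single $c_0\in\{0,-8/27\}$. I would argue that this forces $(u,v)$ to be one of the special solutions of Section \ref{sec:special}. Indeed, $(u(z),v(z))$ would then approach the singular fibre $E^{-1}(c_0)$: for $c_0=0$ this is the union of the three lines of Proposition \ref{prop:special}, so $u$, $v$, or $u+v+2$ tends to zero along the flow, which, via Proposition \ref{prop:special}, traps $(u,v)$ on one of the Riccati or Hermite special branches; for $c_0=-8/27$ the orbit accumulates on the nodal cubic, and the only orbits whose entire limit set sits there correspond to the rational solutions listed in Section \ref{sec:rational}. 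In either case $(u,v)$ is special, contradicting the hypothesis. Therefore $\Omega_{(u,v)}\cap\mathcal{L}_i\neq\emptyset$ for $i=7,8,9$.

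Finally, fix $i\in\{7,8,9\}$ and $s\in\Omega_{(u,v)}\cap\mathcal{L}_i$, and take $z_n\to\infty$ with $(u(z_n),v(z_n))\to s$. Because $\mathcal{L}_7$ is the exceptional divisor produced by blowing up $b_6\in\mathcal{L}_4^*$, convergence to $s\in\mathcal{L}_7$ forces $(u(z_n),v(z_n))$ into the neighbourhood of $\mathcal{L}_4^*\setminus\mathcal{L}_1^*$ required by Lemma \ref{lemma:L4-L1}, which then produces a unique $\zeta_n$ close to $z_n$ with $v_{71}(\zeta_n)=0$, i.e.~a crossing of $\mathcal{L}_7(\zeta_n)$. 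Since $\zeta_n$ lies in a disc around $z_n$ of radius $\sim|d(\zeta_n)|R_4\to 0$, the $\zeta_n$ tend to infinity and are eventually pairwise distinct, so there are infinitely many crossings of $\mathcal{L}_7$; the identical argument using Lemmas \ref{lemma:L5-L2} and \ref{lemma:L6-L3} handles $\mathcal{L}_8$ and $\mathcal{L}_9$. The hard part of the plan is the third paragraph: making rigorous the statement that a non-special $(u,v)$ cannot have its limit set entirely in a singular fibre of the pencil, which requires a careful pairing of Proposition \ref{prop:special} with the classification of special solutions via the B\"acklund transformations of Section \ref{sec:rational}.
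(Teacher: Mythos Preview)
Your approach is essentially the paper's: both use Theorem~\ref{th:K} to obtain a compact, connected, invariant limit set, argue that it contains a full irreducible component of some level curve $\{E=c\}$, note that for $c\neq 0$ this component meets every pole line, and then convert accumulation on a pole line into infinitely many actual crossings. For this last step the paper simply invokes transversality of the vector field to $\mathcal{L}_7\cup\mathcal{L}_8\cup\mathcal{L}_9$, whereas you route it through Lemmas~\ref{lemma:L4-L1}, \ref{lemma:L5-L2}, \ref{lemma:L6-L3}; the content is the same and your version is a little heavier but perfectly valid.

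One simplification you are missing: you single out $c_0=-8/27$ as a separate problematic case, but this is unnecessary. The nodal cubic is irreducible and still passes through all three base points $b_0,b_1,b_2$, so the paper treats it together with the smooth cubics. If the limit set contains a non-equilibrium point on the nodal cubic, the complex-time orbit closure is again the whole curve and meets each pole line; if the limit set reduces to the node alone, the paper disposes of this via the preliminary step you omit, namely that a limit set missing every pole line lies in the affine chart, forcing $u,v$ to be bounded and holomorphic at $z=\infty$, hence rational. Folding that argument in would let you drop the $-8/27$ discussion entirely.

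On the gap you flag at $c_0=0$ (showing that a limit set confined to one of the three lines forces the solution to be special): the paper also merely asserts this correspondence without supplying a proof that the relevant parameter constraint (e.g.\ $\alpha_1=0$) must hold, so your proposal and the paper are at the same level of rigor on that point.
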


\begin{proof}
First, suppose that a solution $(u(z),v(z))$ intersects the union $\mathcal{L}_7\cup\mathcal{L}_8\cup\mathcal{L}_9$ only finitely many times.

According to Theorem \ref{th:K}, the limit set $\Omega_{(u,v)}$ is a compact set in $\mathcal{F}(\infty)\setminus\mathcal{I}(\infty)$.
If $\Omega_{(u,v)}$ intersects one the three pole lines $\mathcal{L}_7$, $\mathcal{L}_8$, $\mathcal{L}_9$ at a point $p$, then there exists arbitrarily large $z$ such that $(u(z),v(z))$ is arbitrarily close to $p$, when the transversality of the vector field to the pole line implies that $(u(\zeta),v(\zeta))\in\mathcal{L}_7\cup\mathcal{L}_8\cup\mathcal{L}_9$ for a unique $\zeta$ near $z$.
As this would imply that $(u(z),v(z))$ intersects $\mathcal{L}_7\cup\mathcal{L}_8\cup\mathcal{L}_9$ has infinitely many times, it follows that $\Omega_{(u,v)}$ is a compact subset of 
$\mathcal{F}(\infty)\setminus(\mathcal{I}({\infty})\cup\mathcal{L}_7({\infty})\cup\mathcal{L}_8({\infty})\cup\mathcal{L}_9({\infty}))$.
However, $\mathcal{L}_7({\infty})\cup\mathcal{L}_8({\infty})\cup\mathcal{L}_9({\infty})$ is equal to the set of all points in 
$\mathcal{F}(\infty)\setminus\mathcal{I}(\infty)$ which project to the line $\mathcal{L}_0$, and therefore 
$\mathcal{F}(\infty)\setminus(\mathcal{I}({\infty})\cup\mathcal{L}_7({\infty})\cup\mathcal{L}_8({\infty})\cup\mathcal{L}_9({\infty}))$
is the affine $(u,v)$ coordinate chart, of which $\Omega_{(u,v)}$ is a compact subset, which implies that $u(z)$ and $v(z)$ remain bounded for large $|z|$.
It follows from boundedness of $u$ and $v$ that $u(z)$ and $v(z)$ are equal to holomorphic functions of $1/z$ in a neighbourhood of $z=\infty$, which implies that there are complex numbers $u(\infty)$, $v(\infty)$ which are the limit points of $u(z)$ and $v(z)$ as 
$|z|\to\infty$. 
In other words, $\Omega_{(u,v)}=\{(u(\infty),v(\infty))\}$ is a one point set.
That means that that the solution is analytic at infinity, i.e.,~ it is analytic on the whole $\mathbb{CP}^1$, thus it must be rational.

Since the limit set $\Omega_{(u,v)}$ is invariant under the autonomous flow, it means that it will contain the whole irreducible component of a cubic curve:
$
-uv(u+v+2)=c,
$
for some constant $c$.
As shown in Section \ref{sec:special}, such a curve is reducible for $c=0$, and the special solutions correspond to each of the irreducible components.
In all other cases, all three base points $b_0$, $b_1$, $b_2$ on the line $\mathcal{L}_0$ will be contained in the limit set, which are projections of the pole lines $\mathcal{L}_7(\infty)$, $\mathcal{L}_8(\infty)$, $\mathcal{L}_9(\infty)$ respectively.
Thus, a non-special solution will intersect each of them infinitely many times.     
\end{proof}

\begin{remark}\label{rem:rational}
The limit set $\Omega_{(u,v)}$ is invariant under the autonomous Hamiltonian system.
If it contains only one point, as we obtained in the proof of Theorem \ref{prop:intersections}, that point must be an equilibrium point of the autonomous Hamiltonian system (\ref{eq:sistemzinf}), that is:
$$
(u(\infty),v(\infty))\in\left\{(0,0),(0,-2),(-2,0),\left(-\frac23,-\frac23\right)\right\}.
$$
These are limiting values of the rational solutions, see Section \ref{sec:rational}.
\end{remark}

\begin{theorem}\label{th:poleszeroes}
Every non-special solution of the fourth Painlev\'e equation (\ref{eq:PIV}) has infinitely many poles and infinitely many zeros.
\end{theorem}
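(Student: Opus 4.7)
The plan is to reduce Theorem \ref{th:poleszeroes} directly to Proposition \ref{prop:intersections}, which is the deep dynamical statement. All that remains is careful bookkeeping through the Boutroux-type change of variables $y_1=xu$, $y_2=xv$, $z=x^2/2$ used at the beginning of Section \ref{sec:okamoto} to pass from $\PIV$ to the system (\ref{eq:sistemz}).

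First, I would translate the content of Figure \ref{fig:fiber} into statements about $y=y_1$. Away from $x=0$, a zero (resp.\ pole) of $y(x)$ at $x=x_0$ corresponds exactly to a zero (resp.\ pole) of $u(z)$ at $z_0=x_0^2/2$, because the factor $x$ is nonzero and holomorphic there. By Figure \ref{fig:fiber} and the description of the charts in Section \ref{sec:okamoto}, points of $\mathcal{L}_7$ are loci where $u=\infty$ (and $v=0$), points of $\mathcal{L}_8$ are loci where $u=\infty$ (and $v=\infty$), and points of $\mathcal{L}_9$ are loci where $u=0$ (and $v=\infty$). Consequently, each intersection of the lifted solution curve with $\mathcal{L}_7\cup\mathcal{L}_8$ yields a pole of $u(z)$, hence a pole of $y(x)$, while each intersection with $\mathcal{L}_9$ yields a zero of $u(z)$, hence a zero of $y(x)$.

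Second, I would invoke Proposition \ref{prop:intersections}: for a non-special solution $(u,v)$, the flow crosses each of $\mathcal{L}_7$, $\mathcal{L}_8$, $\mathcal{L}_9$ infinitely often, and these crossings accumulate only at $|z|=\infty$ (since the lines $\mathcal{L}_7(z),\mathcal{L}_8(z),\mathcal{L}_9(z)$ persist in the fiber over every finite $z\neq 0$ and the intersections are isolated by transversality used already in the proof of that proposition). Unwinding via $x=\pm\sqrt{2z}$, these crossings pull back to an unbounded sequence of points in the $x$-plane, and the map $z\mapsto x$ is two-to-one, so at worst the number of $x$-poles or $x$-zeros is the same as the number of $z$-poles or $z$-zeros, which is still infinite. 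Finally, I would note that "non-special" in the sense of Proposition \ref{prop:intersections} coincides with "non-special" in Theorem \ref{th:poleszeroes}: by Section \ref{sec:special}, the only solutions corresponding to the reducible curve $E=0$ or to the nodal curve $E=-8/27$ are exactly the Riccati-type solutions (parabolic cylinder / exponential) and the rational ones, together with their B\"acklund transforms, which are precisely those excluded in both statements.

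The argument contains no genuine obstacle beyond Proposition \ref{prop:intersections}; the only point requiring a moment's care is ensuring that distinct intersections in the $z$-plane do not collapse to the same $x$ under the two-to-one map $x\mapsto z=x^2/2$, and that the intersections really accumulate at $|z|=\infty$ rather than at some finite $z$. The former is automatic because a pole of $y$ at $x_0$ gives a pole of $u$ at $z_0=x_0^2/2$ and conversely each $z_0$ lifts to at most two $x$-values, so infinitude of $z$-poles forces infinitude of $x$-poles; the latter was already established in Theorem \ref{th:asymptotics} and in the proof of Proposition \ref{prop:intersections}, since the infinity set $\mathcal{I}$ is repelling and the limit set lies in $\mathcal{F}(\infty)$.
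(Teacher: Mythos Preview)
Your proposal is correct and follows essentially the same route as the paper: reduce to Proposition~\ref{prop:intersections}, observe that crossings of $\mathcal{L}_7$ and $\mathcal{L}_8$ give poles of $u$ while crossings of $\mathcal{L}_9$ give zeros of $u$, and then translate back to $y$ via $y_1=xu$. The paper's proof is in fact terser than yours; it simply asserts that it suffices to show $u$ has infinitely many poles and zeros, whereas you spell out the bookkeeping for the two-to-one map $x\mapsto z=x^2/2$ and the accumulation of crossings at $|z|=\infty$, which the paper leaves implicit.
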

\begin{proof}
It is enough to prove that a non-special solution $u$ of (\ref{eq:sistemz}) has infinitely many poles and zeroes.
Notice that at the intersection point with $\mathcal{L}_7$, $u$ has a pole and $v$ a zero; at the intersection with $\mathcal{L}_8$ both have poles, and on $\mathcal{L}_9$, $u$ has a zero and $v$ a pole.
Since it is shown in Propostion \ref{prop:intersections} that $(u,v)$ intersects each of the lines  $\mathcal{L}_7$, $\mathcal{L}_8$, $\mathcal{L}_9$ infinitely many times, the statement is proved.  
\end{proof}

\begin{appendices}


\section{Resolution of the Painlev\'e vector field}\label{sec:res-piv}
\subsection{The affine charts}\label{sec:affine}
\subsubsection{Affine Chart $(u_{01}, v_{01})$}
The first affine chart is defined by the original coordinates
$$
\begin{aligned}
 u_{01}&=u,
 \\
 v_{01}&=v,
 \\
 E&=-uv(u+v+2).
\end{aligned}
$$

\subsubsection{Affine Chart $(u_{02}, v_{02})$} The second affine chart is given by the following coordinates:
\begin{gather*}
u_{02}=\frac1{u},
\qquad
v_{02}=\frac{v}{u},
\\
u=\frac1{u_{02}},
\qquad
v=\frac{v_{02}}{u_{02}}.
\end{gather*}
The line at the infinity is $\mathcal{L}_0: u_{02}=0$.

The Painlev\'e vector field is given by
$$
\begin{aligned}
u_{02}'&=1+2u_{02}+2v_{02}+\frac1{2z}(2\alpha_1u_{02}^2+u_{02}),
\\
v_{02}'&=\frac{v_{02}}{u_{02}}(4u_{02}+3v_{02}+3)+\frac1{z}(-\alpha_2u_{02}+\alpha_1u_{02}v_{02}),
\end{aligned}
$$
which contain base points at
$$
b_0\ :\ u_{02}=0,v_{02}=0
\quad\text{and}\quad
b_1\ :\ u_{02}=0,v_{02}=-1.
$$

The energy is
$$
\begin{aligned}
E&=-\frac{v_{02} (1 + 2 u_{02} + v_{02})}{u_{02}^3},
\\
E'&=\frac{1}{2 u_{02}^3 z}
\left(4 \alpha_1 u_{02}^2 v_{02} + 2 \alpha_1 u_{02} v_{02}^2+ 4 \alpha_2 u_{02}^2+  3 v_{02}^2
\right.
\\
&\qquad\qquad
\left.
+ 4(\alpha_1+\alpha_2+1) u_{02} v_{02}
+2 \alpha_2 u_{02}  + 3 v_{02}\right).
\end{aligned}
$$

\subsubsection{Affine Chart $(u_{03}, v_{03})$}
We have the coordinates
\begin{gather*}
u_{03}=\frac1{v},
\qquad
v_{03}=\frac{u}{v},
\\
u=\frac{v_{03}}{u_{03}},
\qquad
v=\frac{1}{u_{03}},
\end{gather*}
and the line at the infinity is given by $\mathcal{L}_0: u_{03}=0$.

The flow is given by
$$
\begin{aligned}
u_{03}'&=-1-2u_{03}-2v_{03}+\frac1{2z}(2\alpha_2u_{03}^2+u_{03}),
\\
v_{03}'&=-\frac{v_{03}}{u_{03}}(4u_{03}+3v_{03}+3)+\frac1z(-\alpha_1u_{03}+\alpha_2u_{03}v_{03}),
\end{aligned}
$$
which contains a base point at
$$
b_2\ :\ u_{03}=0,v_{03}=0,
$$
and $(u_{03}=0,v_{03}=-1)$, which is $b_1$.

The energy is given by
$$
\begin{aligned}
E&=-\frac{v_{03} (1 + 2 u_{03} + v_{03})}{u_{03}^3},
\\
E'&=\frac{1}{2 u_{03}^3 z}
\left(4 \alpha_2 u_{03}^2 v_{03}+ 2 \alpha_2 u_{03} v_{03}^2+ 4 \alpha_1 u_{03}^2 
 +  3 v_{03}^2
\right. 
\\
&\qquad\qquad
\left.
+4(\alpha_1+\alpha_2+1)u_{03} v_{03}
+2 \alpha_1 u_{03}  + 3 v_{03}
\right).
\end{aligned}
$$

\subsection{Resolution at base points $b_0$, $b_1$, $b_2$ }\label{sec:b012}
\subsubsection{Resolution at $b_0$}
The first chart is given by
the coordinate change:
\begin{gather*}
u_{11}=\frac{u_{02}}{v_{02}}=\frac1{v},
\qquad 
v_{11}=v_{02}=\frac{v}{u},
\\
u=\frac{1}{u_{11}v_{11}},
\qquad
v=\frac1{u_{11}}.
\end{gather*}
The exceptional line is $\mathcal{L}_1:v_{11}=0$.
The preimage of line $\mathcal{L}_0$ is visible in this chart, and given by the equation $u_{11}=0$.

The flow in this chart:
$$
\begin{aligned}
u_{11}'&=-\frac{1}{v_{11}}(v_{11}+2u_{11}v_{11}+2)+\frac{u_{11}}{2z}(1+2\alpha_2u_{11}),
\\
v_{11}'&=\frac{1}{u_{11}}(3v_{11}+4u_{11}v_{11}+3)+\frac{u_{11}v_{11}}{z}(-\alpha_2+\alpha_1v_{11}),
\end{aligned}
$$
contains no new base points.

The energy is given by
$$
\begin{aligned}
E&=-\frac{1 + v_{11} + 2 u_{11} v_{11}}{u_{11}^3 v_{11}^2},
\\
E'&=\frac{1}{2 u_{11}^3 v_{11}^2 z}
\left(3 + 2 \alpha_2 u_{11} + 3 v_{11} + 4(1+\alpha_1+\alpha_2) u_{11} v_{11} 
\right.
\\&
\qquad\qquad
\left.
+ 
 4 \alpha_2 u_{11}^2 v_{11} + 2 \alpha_1 u_{11} v_{11}^2 + 4 \alpha_1 u_{11}^2 v_{11}^2
 \right).
\end{aligned}
$$

The second chart is given by
\begin{gather*}
u_{12}=u_{02}=\frac1{u},
\qquad
v_{12}=\frac{v_{02}}{u_{02}}=v,
\\
u=\frac{1}{u_{12}},
\qquad
v=v_{12}.
\end{gather*}
The exceptional line is $\mathcal{L}_1:u_{12}=0$.
The preimage of line $\mathcal{L}_0$ is not visible in this chart.

The flow is
$$
\begin{aligned}
u_{12}'&=1+2u_{12}+2u_{12}v_{12}+\frac{u_{12}}{2z}(1+2\alpha_1u_{12}),
\\
v_{12}'&=\frac{v_{12}}{u_{12}}(2u_{12}+2+u_{12}v_{12})-\frac{1}{2z}(2\alpha_2+v_{12}).
\end{aligned}
$$
Both the vector field and the anticanonical pencil have base point at
$$
b_3\ :\ u_{12}=0,v_{12}=0.
$$

The energy is given by
$$
\begin{aligned}
E&=-\frac{v_{12} (1 + 2u_{12} + u_{12} v_{12})}{u_{12}^2},
\\
E'&=\frac{1}{2 u_{12}^2 z}
\left(
2 \alpha_2 + 4 \alpha_2 u_{12} + 3 v_{12} + 4(1+ \alpha_1+ \alpha_2) u_{12} v_{12}
\right.
\\
&\qquad\qquad
\left.
+ 
 4 \alpha_1 u_{12}^2 v_{12} + 3 u_{12} v_{12}^2 + 2 \alpha_1 u_{12}^2 v_{12}^2
 \right).
\end{aligned}
$$

\subsubsection{Resolution at $b_1$}
The first chart is given by
the coordinate change:
\begin{gather*}
u_{21}=\frac{u_{02}}{v_{02}+1}=\frac1{u+v},
\qquad
v_{21}=v_{02}+1=\frac{u+v}{u},
\\
u=\frac{1}{u_{21}v_{21}},
\qquad
v=\frac{v_{21}-1}{u_{21}v_{21}}.
\end{gather*}
The exceptional line is $\mathcal{L}_2:v_{21}=0$.
The preimage of the line $\mathcal{L}_0$ is visible in this chart, and given by the equation $u_{21}=0$.

The flow is given by
$$
\begin{aligned}
u_{21}'&=\frac{(2u_{21}+1)(2-v_{21})}{v_{21}}+\frac{u_{21}}{2z}\big(2(\alpha_1+\alpha_2)u_{21}+1\big),
\\
v_{21}'&=\frac{(4u_{21}+3)(v_{21}-1)}{u_{21}}+\frac{u_{21}v_{21}}{z}(\alpha_1v_{21}-\alpha_1-\alpha_2),
\end{aligned}
$$
and contains a new base point at
$$
b_4 :\ u_{21}=-\frac12,v_{21}=0.
$$

The energy is given by
$$
\begin{aligned}
E&=-\frac{( 2 u_{21} +1) (v_{21}-1)}{u_{21}^3 v_{21}^2},
\\
E'&=\frac{1}{2 u_{21}^3 v_{21}^2 z}\big({-3} - 2(2+ \alpha_1 + \alpha_2) u_{21} + 3 v_{21} + 4 (1+ \alpha_2) u_{21} v_{21}
\\
&\qquad\qquad
+ 4( \alpha_2 - \alpha_1) u_{21}^2 v_{21} + 2 \alpha_1 u_{21} v_{21}^2 + 4 \alpha_1 u_{21}^2 v_{21}^2\big).
\end{aligned}
$$

The second chart is given by 
\begin{gather*}
u_{22}=u_{02}=\frac1{u},
\qquad
v_{22}=\frac{v_{02}+1}{u_{02}}=u+v,
\\
u=\frac{1}{u_{22}},
\qquad
v=v_{22}-\frac{1}{u_{22}}.
\end{gather*}
The exceptional line is $\mathcal{L}_2:u_{22}=0$.
The preimage of line $\mathcal{L}_0$ is not visible in this chart.

The flow is given by
$$
\begin{aligned}
u_{22}'&=-1+2u_{22}+2u_{22}v_{22}+\frac{u_{22}}{2z}(1+2\alpha_1u_{22}),
\\
v_{22}'&=\frac{(v_{22}+2)(u_{22}v_{22}-2)}{u_{22}}-\frac{1}{2z}(v_{22}+2\alpha_1+2\alpha_2),
\end{aligned}
$$
and contains a base point $(u_{22}=0,v_{22}=-2)$, which is $b_4$.

The energy is given by
$$
\begin{aligned}
E&=\frac{(2 + v_{22}) (1 - u_{22} v_{22})}{u_{22}^2},
\\
E'&=\frac{1}{2 u_{22}^2 z}\big({-2}(2+ \alpha_1 + \alpha_2) + 4( \alpha_2- \alpha_1) u_{22} - 3 v_{22} 
+ 4(1+\alpha_2) u_{22} v_{22}
\\
&\qquad\qquad
+ 4 \alpha_1 u_{22}^2 v_{22} + 3 u_{22} v_{22}^2 + 2 \alpha_1 u_{22}^2 v_{22}^2\big).
\end{aligned}
$$

\subsubsection{Resolution at $b_2$}
The first chart is given by
\begin{gather*}
u_{31}=\frac{u_{03}}{v_{03}}=\frac1u,
\qquad
v_{31}=v_{03}=\frac{u}{v},
\\
u=\frac{1}{u_{31}},
\qquad
v=\frac{1}{u_{31}v_{31}}.
\end{gather*}
The exceptional line is $\mathcal{L}_3:v_{31}=0$.
The preimage of line $\mathcal{L}_0$ is visible in this chart, and given by the equation $u_{31}=0$.

The flow 
$$
\begin{aligned}
u_{31}'&=\frac{v_{31}+2u_{31}v_{31}+2}{v_{31}}+\frac{u_{31}}{2z}(1+2\alpha_1u_{31}),
\\
v_{31}'&=-\frac{3+4u_{31}v_{31}+3v_{31}}{u_{31}}+\frac{u_{31}v_{31}}{z}(-\alpha_1+\alpha_2v_{31}),
\end{aligned}
$$
contains no base point.

The energy is given by
$$
\begin{aligned}
E&=-\frac{1 + v_{31} + 2 u_{31} v_{31}}{u_{31}^3 v_{31}^2},
\\
E'&=\frac{1}{2 u_{31}^3 v_{31}^2 z}
\left(3 + 2 \alpha_1 u_{31} + 3 v_{31} + 4(1+\alpha_1+\alpha_2) u_{31} v_{31} 
\right.
\\
&\qquad\qquad
\left.
+ 
 4 \alpha_1 u_{31}^2 v_{31} + 2 \alpha_2 u_{31} v_{31}^2 + 4 \alpha_2 u_{31}^2 v_{31}^2
 \right)
.
\end{aligned}
$$

The second chart is given by
\begin{gather*}
u_{32}=u_{03}=\frac1v,
\qquad
v_{32}=\frac{v_{03}}{u_{03}}=u,
\\
u=v_{32},
\qquad
v=\frac{1}{u_{32}}.
\end{gather*}
The exceptional line is $\mathcal{L}_3:u_{32}=0$.
The preimage of line $\mathcal{L}_0$ is not visible in this chart.

The flow is
$$
\begin{aligned}
u_{32}'&=-1-2u_{32}-2u_{32}v_{32}+\frac{u_{32}}{2z}(1+2\alpha_2u_{32}),
\\
v_{32}'&=-\frac{v_{32}}{u_{32}}(2u_{32}+2+v_{32}u_{32})-\frac1{2z}(2\alpha_1+v_{32}).
\end{aligned}
$$
Both the vector field and the anticanonical pencil have a base point at
$$
b_5\ :\ u_{32}=0,v_{32}=0.
$$

The energy is given by
$$
\begin{aligned}
E&=-\frac{v_{32} (1 + 2u_{32} + u_{32}v_{32})}{u_{32}^2},
\\
E'&=\frac{1}{2 u_{32}^2 z}
\left(2 \alpha_1 + 4 \alpha_1 u_{32} + 3 v_{32} 
+ 4(1+ \alpha_1+ \alpha_2) u_{32} v_{32}
\right.
\\
&\qquad\qquad
\left.
+ 
 4 \alpha_2 u_{32}^2 v_{32} + 3 u_{32} v_{32}^2 + 2 \alpha_2 u_{32}^2 v_{32}^2
 \right).
\end{aligned}
$$

\subsection{Resolution at points $b_3$, $b_4$, $b_5$}\label{sec:b345}

\subsubsection{Resolution at $b_3$}\label{sec:b3}
The first chart is 
\begin{gather*}
u_{41}=\frac{u_{12}}{v_{12}}=\frac1{uv},
\qquad
v_{41}=v_{12}=v,
\\
u=\frac{1}{u_{41}v_{41}},
\qquad
v=v_{41},
\end{gather*}
and the corresponding Jacobian is
\[
J_{41}=\frac{\partial u_{41}}{\partial u}\frac{\partial v_{41}}{\partial v}-\frac{\partial u_{41}}{\partial v}\frac{\partial v_{41}}{\partial u}
=-\frac{1}{u^2v}=-u_{41}^2v_{41}.
\]
The exceptional line is $\mathcal{L}_4:v_{41}=0$.
The preimage of line $\mathcal{L}_1$ in this chart is $u_{41}=0$.
$\mathcal{L}_0$ is not visible in this chart.

The flow is given by
$$
\begin{aligned}
u_{41}'=&-\frac{1}{v_{41}}+u_{41}v_{41}+\frac{u_{41}(\alpha_2+v_{41}+\alpha_1u_{41}v_{41}^2)}{zv_{41}},
\\
v_{41}'=&\frac2{u_{41}}+2v_{41}+v_{41}^2-\frac{1}{2z}(2\alpha_2+v_{41}),
\end{aligned}
$$
and contains a base point:
$$
b_6\ :\ u_{41}=\frac{z}{\alpha_2},v_{41}=0.
$$

The energy and related quantities are
$$
\begin{aligned}
E=&-\frac{1 + 2u_{41} v_{41} + u_{41}v_{41}^2}{u_{41}^2 v_{41}},
\qquad
EJ_{41}=1 + 2u_{41} v_{41} + u_{41}v_{41}^2,
\\
E'=&\frac{1}{2 u_{41}^2 v_{41}^2 z}
\left(2 \alpha_2 + 3 v_{41} + 4 \alpha_2 u_{41} v_{41} 
+ 4(1+\alpha_1+ \alpha_2) u_{41} v_{41}^2
\right.
\\
&
\left.
\qquad\qquad
+ 3 u_{41} v_{41}^3 + 4 \alpha_1 u_{41}^2 v_{41}^3 + 2 \alpha_1 u_{41}^2 v_{41}^4
\right).
\end{aligned}
$$

The second chart is given by
\begin{gather*}
u_{42}=u_{12}=\frac1u,
\qquad
v_{42}=\frac{v_{12}}{u_{12}}=uv,
\\
u=\frac{1}{u_{42}},
\qquad
v=u_{42}v_{42}.
\end{gather*}
The exceptional line is $\mathcal{L}_4:u_{42}=0$.
The preimages of $\mathcal{L}_0$  and $\mathcal{L}_1$ are not visible in this chart.

The flow is
$$
\begin{aligned}
u_{42}'&=1+2u_{42}+2u_{42}^2v_{42}+\frac{u_{42}}{2z}(1+2\alpha_1u_{42}),
\\
v_{42}'&=\frac{v_{42}}{u_{42}}-u_{42}v_{42}^2-\frac{1}{zu_{42}}(\alpha_2+u_{42}v_{42}+\alpha_1u_{42}^2v_{42}),
\end{aligned}
$$
and contains a base point $(u_{42}=0,v_{42}=\frac{\alpha_2}z)$, which is $b_6$.

\subsubsection{Resolution at $b_4$}
The first chart is given by
\begin{gather*}
u_{51}=\frac{u_{21}+\frac12}{v_{21}}=\frac{u(u+v+2)}{2(u+v)^2},
\qquad
v_{51}=v_{21}=\frac{u+v}{u},
\\
u=\frac{2}{v_{51} (2 u_{51} v_{51}-1)},
\qquad
v=\frac{2 (v_{51}-1)}{v_{51} ( 2 u_{51} v_{51}-1)}.
\end{gather*}
The exceptional line is $\mathcal{L}_5:v_{51}=0$.
The preimage of $\mathcal{L}_2$ is not visible in this chart, while the preimage of $\mathcal{L}_0$ is given by $u_{51}v_{51}=\frac{1}{2}$.

The flow is given by
$$
\begin{aligned}
u_{51}'=&-\frac{2u_{51}(1+2u_{51}v_{51}(3v_{51}-4))}{v_{51}(2u_{51}v_{51}-1)}+
\\
&+\frac{(2u_{51}v_{51}-1)(\alpha_1+\alpha_2-1-4(\alpha_1+\alpha_2)u_{51}v_{51}+2\alpha_1u_{51}v_{51}^2)}{4zv_{51}},
\\
v_{51}'=&\frac{2(v_{51}-1)(4u_{51}v_{51}+1)}{2u_{51}v_{51}-1}+\frac{v_{51}(\alpha_1v_{51}-\alpha_1-\alpha_2)(2u_{51}v_{51}-1)}{2z},
\end{aligned}
$$
and contains a base point 
$$
b_7\ :\ u_{51}=\frac{1-\alpha_1-\alpha_2}{8z},v_{51}=0.
$$

The second chart is
\begin{gather*}
u_{52}=u_{21}+\frac{1}{2}=\frac{1}{u+v}+\frac12,
\qquad
v_{52}=\frac{v_{21}}{u_{21}+\frac12}=\frac{2(u+v)^2}{u(u+v+2)},
\\
u=\frac{2}{(2 u_{52}-1) u_{52}v_{52}},
\qquad
v=\frac{2 ( u_{52} v_{52}-1)}{( 2 u_{52}-1)u_{52} v_{52}}
\end{gather*}
which has Jacobian
\[
J_{52}=\frac{\partial u_{52}}{\partial u}\frac{\partial v_{52}}{\partial v}-\frac{\partial u_{52}}{\partial v}\frac{\partial v_{52}}{\partial u}
=-\frac{2}{u^2(u+v+2)}=-\frac{1}{8}u_{52}(2u_{52}-1)^3v_{52}^2.
\]
The exceptional line is $\mathcal{L}_5:u_{52}=0$.
In this chart, the preimage of $\mathcal{L}_2$ is given by $v_{52}=0$, and of $\mathcal{L}_0$ by $u_{52}=\frac{1}{2}$.

The flow is given by
$$
\begin{aligned}
u_{52}'=&-2u_{52}+\frac4{v_{52}}+\frac{(2u_{52}-1)(1+(\alpha_1+\alpha_2)(2u_{52}-1))}{4z},
\\
v_{52}'=&\frac{2(1-8u_{52}+6u_{52}^2v_{52})}{u_{52}(2u_{52}-1)}
\\&
+
\frac{v_{52}(2u_{52}-1)(\alpha_1+\alpha_2-1-4(\alpha_1+\alpha_2)u_{52}+2\alpha_1u_{52}^2v_{52})}{4zu_{52}},
\end{aligned}
$$
which contains a base point
$\left(u_{52}=0, v_{52}={8z}/(1-\alpha_1-\alpha_2)\right)$, which is $b_7$.

The energy and related quantities are 
$$
\begin{aligned}
E=&-\frac{16 ( u_{52} v_{52}-1)}{u_{52} (2 u_{52}-1)^3 v_{52}^2},
\qquad
EJ_{52}=2 ( u_{52} v_{52}-1),
\\
E'=&\frac{4}{u_{52}^2 ( 2 u_{52}-1)^3 v_{52}^2 z}\big((\alpha_1 + \alpha_2-1) - 2(2+ \alpha_1+ \alpha_2) u_{52}
\\
&\qquad
+ (1-\alpha_1-\alpha_2) u_{52} v_{52}
+ 4(1+\alpha_1) u_{52}^2 v_{52}
+4(\alpha_2-\alpha_1) u_{52}^3 v_{52}
\\
&\qquad
- 2 \alpha_1 u_{52}^3 v_{52}^2 + 4 \alpha_1 u_{52}^4 v_{52}^2\big).
\end{aligned}
$$

\subsubsection{Resolution at $b_5$}
The first chart is
\begin{gather*}
u_{61}=\frac{u_{32}}{v_{32}}=\frac{1}{uv},
\qquad
v_{61}=v_{32}=u,
\\
u=v_{61},
\qquad
v=\frac{1}{u_{61}v_{61}},
\\
J_{61}=\frac{\partial u_{61}}{\partial u}\frac{\partial v_{61}}{\partial v}-\frac{\partial u_{61}}{\partial v}\frac{\partial v_{61}}{\partial u}
=\frac{1}{uv^2}=u_{61}^2v_{61}.
\end{gather*}
The exceptional line is $\mathcal{L}_6:v_{61}=0$.
In this chart, the preimage of $\mathcal{L}_3$ is given by $u_{61}=0$, and the preimage of $\mathcal{L}_0$ is not visible.

The flow is
$$
\begin{aligned}
u_{61}'&=\frac{1-u_{61}v_{61}^2}{v_{61}}+\frac{u_{61}(v_{61}+\alpha_1+\alpha_2u_{61}v_{61}^2)}{zv_{61}},
\\
v_{61}'&=-\frac{2+u_{61}v_{61}(2+v_{61})}{u_{61}}-\frac{v_{61}+2\alpha_1}{2z},
\end{aligned}
$$
and contains a base point:
$$
b_8\ :\ u_{61}=-\frac{z}{\alpha_1},v_{61}=0.
$$

The energy is given by
$$
\begin{aligned}
E&=-\frac{1 + 2 u_{61} v_{61} + u_{61} v_{61}^2}{u_{61}^2 v_{61}},
\qquad
EJ_{61}=-(1 + 2 u_{61} v_{61} + u_{61} v_{61}^2),
\\
E'&=\frac{1}{2 u_{61}^2 v_{61}^2 z}
\big(2 \alpha_1 + 3 v_{61} + 4 \alpha_1 u_{61} v_{61} 
+ 4(1+\alpha_1+\alpha_2) u_{61} v_{61}^2
\\
&\qquad\qquad
+ 3 u_{61} v_{61}^3 + 4 \alpha_2 u_{61}^2 v_{61}^3 + 2 \alpha_2 u_{61}^2 v_{61}^4
\big).
\end{aligned}
$$

The second chart is
\begin{gather*}
u_{62}=u_{32}=\frac1v,
\qquad
v_{62}=\frac{v_{32}}{u_{32}}=uv,
\\
u=u_{62}v_{62},
\qquad
v=\frac{1}{u_{62}}.
\end{gather*}
The exceptional line is $\mathcal{L}_6:u_{62}=0$.
In this chart, the preimages of $\mathcal{L}_3$ and $\mathcal{L}_0$ are not visible.

The flow is
$$
\begin{aligned}
u_{62}'&=-1-2u_{62}-2u_{62}^2v_{62}+\frac{u_{62}(2\alpha_2u_{62}+1)}{2z},
\\
v_{62}'&=\frac{v_{62}(-1+u_{62}^2v_{62})}{u_{62}}-\frac{\alpha_1+u_{62}v_{62}(1+\alpha_2u_{62})}{zu_{62}},
\end{aligned}
$$
and contains a base point is $u_{62}=0,v_{62}=-\,{\alpha_1}/{z}$, which is $b_8$.

The energy is given by
$$
\begin{aligned}
E&=-\frac{v_{62}}{u_{62}} (1 + 2 u_{62} + u_{62}^2 v_{62}),
\qquad
EJ_{62}=-v_{62}(1 + 2 u_{62} + u_{62}^2 v_{62}),
\\
E'&=\frac{1}{2 u_{62}^2 z}
\big(2 \alpha_1 + 4 \alpha_1 u_{62} + 3 u_{62} v_{62} 
+ 4(1+\alpha_1+\alpha_2) u_{62}^2 v_{62} 
\\
&\qquad\qquad
+ 4 \alpha_2 u_{62}^3 v_{62} + 3 u_{62}^3 v_{62}^2 + 2 \alpha_2 u_{62}^4 v_{62}^2\big).
\end{aligned}
$$

\subsection{Resolution at points $b_6$, $b_7$, $b_8$}\label{sec:b678}

\subsubsection{Resolution at $b_6$}
The first chart is
\begin{gather*}
u_{71}=\frac{u_{42}}{v_{42}-\frac{\alpha_2}{z}}=\frac{z}{u(uvz-\alpha_2)},
\qquad
v_{71}=v_{42}-\frac{\alpha_2}{z}=uv-\frac{\alpha_2}{z},
\\
u =\frac{1}{u_{71} v_{71}},
\qquad
v=u_{71} v_{71} \left(v_{71}+\frac{\alpha_2}{z} \right),
\end{gather*}
which gives the Jacobian 
\begin{align*}
J_{71}&=\frac{\partial u_{71}}{\partial u}\frac{\partial v_{71}}{\partial v}-\frac{\partial u_{71}}{\partial v}\frac{\partial v_{71}}{\partial u}
=\frac{z}{u(\alpha_2-uvz)}=-u_{71},
\\
J_{71}'&=-2u_{71}-3u_{71}^2v_{71}^2
-\frac{u_{71}}{2z}\big(3+4(\alpha_1+2\alpha_2)u_{71}v_{71}\big)
-\frac{(\alpha_1+\alpha_2)\alpha_2}{z^2}u_{71}^2.
\end{align*}
The exceptional line is $\mathcal{L}_7:v_{71}=0$.
In this chart, the preimage of $\mathcal{L}_4$ is given by equation $u_{71}=0$, while the preimages of $\mathcal{L}_1$ and $\mathcal{L}_0$ are not visible.

The flow is given by
$$
\begin{aligned}
u_{71}'=&2u_{71}+3u_{71}^2v_{71}^2
+\frac{u_{71}}{2z}\big(3+4(\alpha_1+2\alpha_2)u_{71}v_{71}\big)
+\frac{(\alpha_1+\alpha_2)\alpha_2}{z^2}u_{71}^2,
\\
v_{71}'=&
\frac{1}{u_{71}} - u_{71} v_{71}^3
-\frac{v_{71}}{z}\big(1 + (\alpha_1+2\alpha_2) u_{71} v_{71} \big)
-\frac{\alpha_2(\alpha_1+\alpha_2) u_{71} v_{71}}{z^2},
\end{aligned}
$$
and contains no base points.

The energy is given by
$$
\begin{aligned}
&E=-\frac{1 + 2 u_{71} v_{71} + u_{71}^2 v_{71}^3}{u_{71}}-\frac{\alpha_2}{u_{71} v_{71} z}(1 + 2 u_{71} v_{71} + 2 u_{71}^2 v_{71}^3)
-\frac{\alpha_2^2 }{z^2}u_{71} v_{71},
\\
&EJ_{71}=1 + 2 u_{71} v_{71} + u_{71}^2 v_{71}^3+
\frac{\alpha_2}{v_{71} z}(1 + 2 u_{71} v_{71} + 2 u_{71}^2 v_{71}^3)
+\frac{\alpha_2^2 }{z^2}u_{71}^2 v_{71}.
\end{aligned}
$$

The second chart is
\begin{gather*}
u_{72}=u_{42}=\frac{1}{u},
\qquad
v_{72}=\frac{v_{42}-\frac{\alpha_2}{z}}{u_{42}}=\frac{u(uvz-\alpha_2)}{z},
\\
u=\frac{1}{u_{72}},
\qquad
v=\frac{u_{72}}{z}(zu_{72}v_{72}+\alpha_2).
\end{gather*}
In this chart, the exceptional line $\mathcal{L}_7$ is given by equation $u_{72}=0$, while the preimages of $\mathcal{L}_4$, $\mathcal{L}_1$, and $\mathcal{L}_0$ are not visible.

The flow 
$$
\begin{aligned}
u_{72}'=&\frac{u_{72}}{2z}\big(1 + 2 (\alpha_1 + 2 \alpha_2) u_{72}\big) + 1 + 2 u_{72}  + 2 u_{72}^3 v_{72},
 \\
v_{72}'=&-\frac{2 \alpha_2(\alpha_1+\alpha_2) + 4(\alpha_1+2 \alpha_2) u_{72} v_{72} z + 
  v_{72} z (3 + 4 z + 6 u_{72}^2 v_{72} z)}{2 z^2},
\end{aligned}
$$
contains no base points.

\subsubsection{Resolution at $b_7$}
The first chart is

$$
\begin{aligned}
u_{81}=&\frac{u_{51}-\frac{1-\alpha_1-\alpha_2}{8z}}{v_{51}}
\\
=&\frac{u}{8 (u + v)^3 z}\big(
(\alpha_1 + \alpha_2-1) v^2 + (\alpha_1 + \alpha_2 -1 + 4 z)u^2
\\
&\qquad\qquad
+ 2(\alpha_1 + \alpha_2 -1 + 2 z)uv 
 +  8 u  z 
\big)   ,
\\
v_{81}=&v_{51}=\frac{u+v}{u},
\\
u=&\frac{8 z}{v_{81} ( 8 u_{81} v_{81}^2 z-( \alpha_1 + \alpha_2-1) v_{81} - 4 z )},
\\
v=&\frac{8 (v_{81}-1) z}{v_{81} ( 8 u_{81} v_{81}^2 z-( \alpha_1 + \alpha_2-1) v_{81} - 4 z )}.
\end{aligned}
$$
In this chart, the exceptional line $\mathcal{L}_8$ is given by equation $v_{81}=0$, while the preimages of $\mathcal{L}_5$ and $\mathcal{L}_2$ are not visible.

The flow is given by
$$
\begin{aligned}
u_{81}'=&
\left(\frac{1 - 4 \alpha_1 - 4 \alpha_2}{2 z}-10\right)u_{81}
-\frac{\alpha_1 (\alpha_1 + \alpha_2-1)^2}{64 z^3}v_{81}-\frac{\alpha_1 (\alpha_1 + \alpha_2-1)}{16 z^2}
\\
&
+\left(\frac{\alpha_1}{z}-\frac{5 ( \alpha_1 + \alpha_2-1) (\alpha_1 + \alpha_2)}{8 z^2} \right)u_{81}v_{81}
+\frac{(\alpha_1 + \alpha_2-1)^2 (\alpha_1 + \alpha_2)}{32 z^3}
\\
&
+\frac{3 \alpha_1 (\alpha_1 + \alpha_2-1)}{8 z^2} u_{81}v_{81}^2
+\frac{3(\alpha_1+\alpha_2)}{z}u_{81}^2v_{81}^2-\frac{2\alpha_1}{z}u_{81}^2v_{81}^3
\\
&
+
\frac{3}{16z^2(8 u_{81} v_{81}^2 z-(\alpha_1 + \alpha_2-1) v_{81} - 4 z)}
\times
\big({-(\alpha_1 + \alpha_2-1)^3}
\\
&
\qquad\quad
-4 ( \alpha_1 + \alpha_2-1)^2 z
-32  (3(\alpha_1 + \alpha_2-1) + 8 z) z^2u_{81}
\\
&
\qquad\quad
+8  ( \alpha_1 + \alpha_2-1)( \alpha_1 + \alpha_2-1+ 4z)z u_{81} v_{81}
+512 z^3 u_{81}^2 v_{81}
  \big)
,
\\
v_{81}'=&-4 + 4 v_{81} - \frac{\alpha_1+\alpha_2}{z}u_{81}v_{81}^3+\frac{\alpha_1}{z}u_{81}v_{81}^4
 +\frac{(\alpha_1+\alpha_2-1)(\alpha_1+\alpha_2)}{8z^2}v_{81}^2
 \\
 &-\frac{\alpha_1(\alpha_1+\alpha_2-1)}{8z^2}v_{81}^3-\frac{\alpha_1}{2z}v_{81}^2
  +\frac{\alpha_1+\alpha_2}{2z}v_{81}
  \\
  &
+\frac{24z(v_{81}-1)}{8 u_{81} v_{81}^2 z-(\alpha_1 + \alpha_2-1) v_{81} - 4 z }.
\end{aligned}
$$
There are no new base points.

The second chart is
$$
\begin{aligned}
u_{82}=&u_{51}-\frac{1-\alpha_1-\alpha_2}{8z}=\frac{u(u+v+2)}{2(u+v)^2}-\frac{1-\alpha_1-\alpha_2}{8z},
\\
v_{82}=&\frac{v_{51}}{u_{51}-\frac{1-\alpha_1-\alpha_2}{8z}}=
\frac{u+v}{u}\left(\frac{u(u+v+2)}{2(u+v)^2}-\frac{1-\alpha_1-\alpha_2}{8z}\right)^{-1},
\\
u=&\frac{-8 u_{82}^2 z}{v_{82} (4 u_{82} z + v_{82} ( \alpha_1 + \alpha_2 -1 - 8 u_{82} z))},
\\
v=&\frac{8 u_{82} (u_{82} - v_{82}) z}{v_{82} (4 u_{82} z + v_{82} ( \alpha_1 + \alpha_2 -1 - 8 u_{82} z))}.
\end{aligned}
$$
In this chart, the exceptional line $\mathcal{L}_8$ is given by equation $u_{82}=0$, and the preimage of $\mathcal{L}_5$ by $v_{82}=0$.
The preimage of $\mathcal{L}_2$ is not visible.

The Jacobian is
$$
\begin{aligned}
J_{82}=&
\frac{v_{82} (4 u_{82} z + v_{82} (\alpha_1 + \alpha_2-1 - 8 u_{82} z))^3}{512 u_{82}^3 z^3},
\end{aligned}
$$
while the derivative of the Jacobian is
$$
\begin{aligned}
J_{82}'&=\frac{\partial J_{82}}{u_{82}}u_{82}'+\frac{\partial J_{82}}{v_{82}}v_{82}'+\frac{\partial J_{82}}{z}
\\
&=
-\frac{(4 u_{82} z + v_{82} ( \alpha_1 + \alpha_2-1 - 8 u_{82} z))^2}{512 u_{82}^4 z^3}\times
\big(
3 ( \alpha_1 + \alpha_2-1) v_{82}^2 u_{82}'
\\
&\qquad\qquad-
4u_{82}(u_{82} z + v_{82} ( \alpha_1 + \alpha_2 -1- 8 u_{82} z))
v_{82}'
\\
&\qquad\qquad
+3 (\alpha_1 + \alpha_2-1) v_{82}^2 
\big)
.
\end{aligned}
$$

The flow is given by
$$
\begin{aligned}
u_{82}'=&
\frac{8}{v_{82}}-6u_{82}
-\frac{(\alpha_1 + \alpha_2-1)^2}{64 z^3} u_{82} v_{82} (\alpha_1u_{82} v_{82}-2 \alpha_1 -2 \alpha_2)
\\
&
-\frac{1}{4 z}\big(3(1 - \alpha_1 - \alpha_2) +2( 3\alpha_1+3\alpha_2-1) u_{82} - 
   2 \alpha_1 u_{82}^2 v_{82} 
\\
&
\qquad\qquad   - 8 (\alpha_1+\alpha_2) u_{82}^3 v_{82} + 
   4 \alpha_1 u_{82}^4 v_{82}^2\big)
\\
&
+\frac{\alpha_1 + \alpha_2-1}{16 z^2} 
\big(3( \alpha_1 +  \alpha_2-1) - \alpha_1 u_{82} v_{82}
 - 8 (\alpha_1+\alpha_2) u_{82}^2 v_{82}
\\ 
&\qquad\qquad 
 + 4 \alpha_1 u_{82}^3 v_{82}^2\big)
\\
&
+3\frac{32 z^2 + 
   v_{82} ( \alpha_1 + \alpha_2 -1+ 4 z) ( \alpha_1 + \alpha_2-1 -  8 u_{82} z)}
      {4 v_{82} z (8 u_{82}^2 v_{82} z-( \alpha_1 + \alpha_2-1) u_{82} v_{82} - 4 z )},
\\
v_{82}'=&
10v_{82}
+\frac{( \alpha_1 + \alpha_2-1)^2 v_{82}^2 ( \alpha_1 u_{82} v_{82}-2 \alpha_2 -2 \alpha_1)}{64 z^3}
\\
&
+\frac{( \alpha_1 + \alpha_2-1) v_{82}^2 (\alpha_1+10(\alpha_1+ \alpha_2) u_{82} -
    6\alpha_1   u_{82}^2 v_{82})}{16 z^2}
\\
&
+\frac{v_{82}}{2z} \big( 4 \alpha_1 + 4 \alpha_2-1 - 2 \alpha_1 u_{82} v_{82} - 6 (\alpha_1+\alpha_2) u_{82}^2 v_{82} + 4 \alpha_1 u_{82}^3 v_{82}^2\big)
\\
&
+
\frac{3 v_{82}}{16 z^2 ((1 - \alpha_1 - \alpha_2) u_{82} v_{82} - 4 z + 
     8 u_{82}^2 v_{82} z)}\times
\\
&
\quad
\times     
\big(
( \alpha_1 + \alpha_2-1)^3 v_{82}
-4 (\alpha_1 + \alpha_2-1)^2 (2 u_{82}-1) v_{82} z
\\
&
\qquad
-32 ( \alpha_1 + \alpha_2-1) (u_{82} v_{82}-3) z^2
+256(1-2u_{82})z^3
\big).
\end{aligned}
$$
There are no new base points.

The energy is given by
$$
\begin{aligned}
E=&-\frac{128 ( u_{82} v_{82}-1) z^2 (1 - \alpha_1 - \alpha_2 + 8 u_{82} z)}
{u_{82} v_{82} (8 u_{82}^2 v_{82} z-( \alpha_1 + \alpha_2-1) u_{82} v_{82} - 4 z)^3} ,
\\
EJ_{82}=&-\frac{( u_{82} v_{82}-1) (1 - \alpha_1 - \alpha_2 + 8 u_{82} z)}
{4 u_{82}^4 z (8 u_{82}^2 v_{82} z-( \alpha_1 + \alpha_2-1) u_{82} v_{82} - 4 z)^3}
\times
\\
&\qquad\times
\big(4 u_{82} z + 
      v_{82} ( \alpha_1 + \alpha_2 -1 - 
         8 u_{82} z)\big)^3
.
\end{aligned}
$$

\subsubsection{Resolution at $b_8$}

The first chart is 
\begin{gather*}
u_{91}=\frac{u_{62}}{v_{62}+\frac{\alpha_1}{z}}=\frac{z}{v(uvz+\alpha_1)},
\qquad
v_{91}=v_{62}+\frac{\alpha_1}{z}=uv+\frac{\alpha_1}{z},
\\
u=u_{91}v_{91}\left(v_{91}-\frac{\alpha_1}{z}\right),
\qquad
v=\frac{1}{u_{91}v_{91}},
\\
J_{91}=\frac{\partial u_{91}}{\partial u}\frac{\partial v_{91}}{\partial v}-\frac{\partial u_{91}}{\partial v}\frac{\partial v_{91}}{\partial u}
=\frac{z}{v(\alpha_1 + u v z)}=u_{91},
\\
J_{91}'=
-2 u_{91} - 3 u_{91}^2 v_{91}^2 - \frac{\alpha_1(\alpha_1+ \alpha_2) u_{91}^2}{z^2} 
+ \frac{3 u_{91}}{2 z}
+\frac{2(2\alpha_1+\alpha_2)u_{91}^2 v_{91}}{z}.
\end{gather*}
The exceptional line is $\mathcal{L}_9:v_{91}=0$.
In this chart, the preimage of $\mathcal{L}_6$ is given by equation $u_{91}=0$, while the preimages of $\mathcal{L}_3$ and $\mathcal{L}_0$ are not visible.

The flow is 
$$
\begin{aligned}
u_{91}'=&-2 u_{91} - 3 u_{91}^2 v_{91}^2 - \frac{\alpha_1(\alpha_1+ \alpha_2) u_{91}^2}{z^2} 
+ \frac{3 u_{91}}{2 z}
+\frac{2(2\alpha_1+\alpha_2)u_{91}^2 v_{91}}{z},
\\
v_{91}'=&-\frac{1}{u_{91}} + u_{91} v_{91}^3 + \frac{\alpha_1(\alpha_1+\alpha_2) u_{91} v_{91}}{z^2} 
 - \frac{v_{91}}{z} - \frac{(2 \alpha_1+\alpha_2) u_{91} v_{91}^2}{z},
\end{aligned}
$$
and contains no base points.

Energy:
$$
\begin{aligned}
E=&\frac
{(v_{91} z-\alpha_1 ) (\alpha_1 u_{91}^2 v_{91}^2 - z - 2 u_{91} v_{91} z -  u_{91}^2 v_{91}^3 z)}
{u_{91} v_{91} z^2},
\\
EJ_{91}=&-1 - 2 u_{91} v_{91} - u_{91}^2 v_{91}^3 - \frac{\alpha_1^2 u_{91}^2 v_{91}}{z^2} 
+ \frac{ 2 \alpha_1 u_{91}}{z} + \frac{\alpha_1}{v_{91} z} + \frac{2 \alpha_1 u_{91}^2 v_{91}^2}{z}.
\end{aligned}
$$

The second chart is
\begin{gather*}
u_{92}=u_{62}=\frac{1}{v},
\qquad
v_{92}=\frac{v_{62}+\frac{\alpha_1}{z}}{u_{62}}=\frac{v(uvz+\alpha_1)}{z},
\\
u=u_{92}^2v_{92}-\frac{\alpha_1}{z}u_{92},
\qquad
v=\frac{1}{u_{92}}.
\end{gather*}
The exceptional line is $\mathcal{L}_9:u_{92}=0$.
In this chart, the preimages of  $\mathcal{L}_6$, $\mathcal{L}_3$ and $\mathcal{L}_0$ are not visible.

The flow is given by
$$
\begin{aligned}
u_{92}'=&-1 - 2 u_{92} - 2 u_{92}^3 v_{92} + \frac{u_{92}}{2 z} + \frac{(2 \alpha_1+\alpha_2) u_{92}^2}{z}
,
\\
v_{92}'=&2 v_{92} + 3 u_{92}^2 v_{92}^2 + \frac{\alpha_1(\alpha_1+\alpha_2)}{z^2}  -\frac{ 3 v_{92}}{2 z} 
- \frac{ 2(2 \alpha_1+\alpha_2) u_{92} v_{92}}{z}
.
\end{aligned}
$$
and contains no base points.

\end{appendices}

\bibliographystyle{spmpsci} 
\bibliography{reference}   

\end{document}